\newcommand{\field}[1]{\mathbb{#1}}
\newcommand{\F}{\field{F}}
\newcommand{\cC}{{\cal C}}
\newcommand{\cD}{{\cal D}}
\newcommand{\cG}{{\cal G}}
\newcommand{\cM}{{\cal M}}
\newcommand{\cP}{{\cal P}}
\newcommand{\sP}{\cP}
\newcommand{\sG}{\cG}
\newcommand{\Gr}{\smash{{\sG\kern-1.5pt}_q\kern-0.5pt(n,k)}}
\newcommand{\Gfourk}{\smash{{\sG\kern-1.5pt}_q\kern-0.5pt(4k,2k)}}
\newcommand{\Gk}{\smash{{\sG\kern-1.5pt}_q\kern-0.5pt(n,k_1)}}
\newcommand{\Gkk}{\smash{{\sG\kern-1.5pt}_q\kern-0.5pt(n,k_2)}}
\newcommand{\Grtwo}{\smash{{\sG\kern-1.5pt}_2\kern-0.5pt(n,k)}}
\newcommand{\Gkone}{\smash{{\sG\kern-1.5pt}_q\kern-0.5pt(n,k_1)}}
\newcommand{\Gktwo}{\smash{{\sG\kern-1.5pt}_q\kern-0.5pt(n,k_2)}}
\newcommand{\Ps}{\smash{{\sP\kern-2.0pt}_q\kern-0.5pt(n)}}
\newcommand{\gammav}{\hbox{\boldmath$\gamma$}}
\newcommand{\cv}{{\bf c}}
\newcommand{\fv}{{\bf f}}
\newcommand{\uv}{{\bf u}}
\newcommand{\vv}{{\bf v}}
\newcommand{\Mm}{{\bf M}}
\newcommand{\Um}{{\bf U}}
\newcommand{\Vm}{{\bf V}}
\newcommand{\tauv}{\hbox{\boldmath$\tau$}}
\newcommand\floorb[1]{\left\lfloor #1 \right\rfloor}
\newtheorem{theorem}{Theorem}
\newtheorem{corollary}{Corollary}
\newtheorem{lemma}[theorem]{Lemma}
\newtheorem{remark}{Remark}
\newtheorem{example}{Example}
\begin{document}

\bibliographystyle{IEEEtran}
\title{Error-Correcting Regenerating \\ and Locally Repairable Codes\\via Rank-Metric Codes}
\author{
\IEEEauthorblockN{Natalia Silberstein, Ankit Singh Rawat and Sriram Vishwanath}
\thanks{N.~Silberstein is with the Department of Computer Science, Technion --- Israel Institute of Technology, Haifa 32000, Israel (email: natalys@cs.technion.ac.il).}
\thanks{A.~S.~Rawat and S.~Vishwanath are with the Department of Electrical and Computer Engineering, The University of Texas at Austin, Austin, TX 78712, USA (e-mail: ankitsr@utexas.edu, sriram@austin.utexas.edu).}
\thanks{This paper was presented in part at the 2012 50th Annual
Allerton Conference on Communication, Control, and Computing.}
}

\maketitle
\begin{abstract}
This paper presents and analyzes a novel concatenated coding scheme for enabling error
resilience in two distributed storage settings: one being storage using
existing regenerating codes and the second being storage using locally
repairable codes. The concatenated coding scheme brings together a maximum
rank distance (MRD) code as an outer code and either a globally
regenerating or a locally repairable code as an inner code.
Also, error resilience for combination of locally repairable codes with
regenerating codes is considered.
This concatenated coding system is designed to handle two
different types of adversarial errors: the first type includes
an adversary that can replace the content of an affected node
only once; while the second type studies an adversary that
is capable of polluting data an unbounded number of times.
The paper establishes an upper bound on the resilience capacity for
a locally repairable code and proves that this concatenated coding
coding scheme attains the upper bound on resilience capacity for
the first type of adversary.
Further, the paper presents mechanisms
that combine the presented concatenated coding scheme with subspace signatures
 to achieve error resilience for the second type of~errors.


\end{abstract}
\section{Introduction}
\label{sec:introduction}

Distrubuted storage systems (DSS) have gained importance over recent years as  dependable, easily accessible and well administrated cloud resource for both individuals and businesses. There are multiple research issues that are unique to DSS; some of which are logistical and market driven, while others relate with their underlying design. A primary concern in designing DSS is to ensure resilience to failures, as it is desirable that a user (\emph{data collector}) can retrieve the stored data even in the presence of node failures. As studied in the pioneering work by Dimakis et al.~\cite{dimakis}, coding introduces redundancy to a storage system in the most efficient manner to enable resilience to failures.
In \cite{dimakis}, Dimakis et al. go one step further: when a single node fails, they propose reconstructing the data stored on the failed node in order to maintain the required level of redundancy in the system.  This process of data reconstruction for a failed node is called the \emph{node repair process}~\cite{dimakis}. During a node repair process,  the node which is added to the system to replace the failed node, downloads  data from a set of surviving nodes to reconstruct the lost (or its equivalent) data.

{\em Regenerating codes} and {\em locally repairable codes (LRCs)} are two families of codes that are especially designed to allow for efficient node repairs in DSS. In particular, regenerating codes are designed to reduce \emph{repair bandwidth}, i.e., the amount of data downloaded from surviving nodes during the node repair process. On the other hand, LRCs are designed to have a small number of nodes participating in the node repair process. The constructions for these two families of codes can be found in~\cite{dimakis, WuDim09, SRKR_itw10, SuhRam_isit10, RSK11, zigzag13, DRWS11, DaOg11, pyramid, HL-M2007, oggier_proj, oggier_hom, HSXOCGLY12, DimDim12, PKLK12, RaVi12, SRV12, RKSV12,  Gopalan12, KPLV12, SAPDVCB13, Hollman13} and references therein.

Although failure resilience is of critical importance as failure of storage nodes is commonplace in storage systems, there are multiple other design considerations that merit study in conjunction with failure resilience. These include security, error resilience, update efficiency and load balancing. In this paper, we address the issue of instilling error resilience in DSS, particularly against adversarial errors. In particular, we model and present coding methodologies that allow for a data collector to correctly decode data even in presence of adversarial errors.

In~\cite{dimakis}, Dimakis et al. establish close connections between DSS with node repairs and the network coding problem. Thus, it is natural to apply the  techniques used for error correction in network coding for DSS with node repairs. Rank-metric codes are known to be a powerful solution to error correction problem in network coding~\cite{SKK08,EtSi09,NoUF10,SK11,NZGYS12}. The primary idea of this paper is to apply a similar technique, based on rank-metric codes, for error resilience in DSS.

In this paper, we propose a novel concatenated coding scheme, based on rank-metric codes, which provides resilience against  adversarial errors in both regenerating codes and LRCs  based DSS. Moreover, we also consider error resilience in DSS employing combination of LRCs and regenerating codes~\cite{RKSV12, KPLV12}.

The problem of reliability of DSS against adversarial errors was considered in~\cite{DiDiHo10,oggier_byzantine,pawar11, RSRK_isit12}. In particular, in \cite{pawar11}, Pawar et al.  derive upper bounds on the amount of data that can be stored on the system and reliably made available to a data collector when bandwidth optimal node repair is performed, and present coding strategies that achieve the upper bound for a particular range of system parameters, namely in the bandwidth-limited regime. A related but different problem of securing stored data against passive eavesdroppers is addressed in \cite{pawar11, SRK_globecom11, RKSV12, GRCP13}.

In this paper, we study the notion of an {\em omniscient adversary} who can observe all nodes and has full knowledge of the coding scheme employed by the system~\cite{pawar11}. As in~\cite{pawar11}, we assume an upper bound on the number of nodes that can be controlled by such an omniscient adversary. We classify adversarial attacks by such an adversary into two classes:
\begin{enumerate}
\item {\em Static errors:}  an omniscient adversary replaces the content of an affected node with nonsensical (unrelated) information \emph{only once}. The affected node 
     uses this \emph{same} polluted information during all subsequent repair and data collection processes. Static errors represent a common type of data corruption due to wear out of storage devices, such as latent disk errors or other physical defects of the storage media, where the data stored on a node is permanently distorted. 
\item {\em Dynamic errors:} an omniscient adversary may replace the content of an affected node \emph{each time} the node is asked for its data during data collection or repair process. This kind of errors captures any malicious behaviour, hence is more difficult to manage in comparison with static errors.
\end{enumerate}

 We present a novel concatenated coding scheme for DSS which provides resilience against these two classes of attacks and allow for either optimal repair bandwidth or optimal local node repairs. In our scheme for an optimal repair bandwidth DSS, the content to be stored is first encoded using a maximum rank distance (MRD) code. The output of this outer code is further encoded using a regenerating code, which allows for  bandwidth efficient node repairs. Using an MRD code, which is an optimal rank-metric code, allows us to quantify the errors introduced in the system using their rank as opposed to their Hamming weights. The dynamic nature of the DSS causes a large number of nodes to get polluted even by a single erroneous node, as  false information spreads from node repairs. Thus, a single polluted node infects many others, resulting in an error vector with a large Hamming weight. Using rank-metric codes can  help alleviate this problem as the error that a data collector has to handle has a known rank, and can therefore be corrected by an MRD code with a sufficient rank distance.
 Using an $(n,k)$ bandwidth efficient MDS array code as inner code facilitates bandwidth efficient  node repair in the event of a single node failure and allows the data collector to recover the original data from any subset of $k$ storage nodes. In this paper, we use exact-regenerating linear bandwidth efficient codes operating at the minimum-storage regenerating (MSR) point~\cite{dimakis}.  However, our construction can be utilized for regenerating codes operating at any other point of repair bandwidth vs. per node storage trade-off.

 Further, we focus on error resilience for  locally repairable DSS. We present a bound on the resilience capacity for such DSS. To the best of our knowledge,  this is the first work which considers the issue of error resilience for locally repairable codes. Then we show that our scheme based on MRD codes can be applied to LRC codes as well. In addition, we consider error resilience in codes with optimal bandwidth \emph{local} regeneration, where MSR or MBR codes are utilized as local codes~\cite{RKSV12, KPLV12, MBR-LRC}.

Our coding schemes are directly applicable to the \textit{static error} model and are optimal in terms of amount of data that can be reliably stored on a DSS under static error model. The model with \textit{dynamic errors} is more complicated, as it permits a single malicious node to change its pollution pattern, and introduce an arbitrarily large error both in Hamming weight and in rank. For dynamic error model, we present two solutions based on the concatenated coding scheme employed for static error model. (We consider only the solutions for a secure regenerating code, as for a secure locally repairable code the same ideas can be applied). One solution is to exploit the inherent redundancy in the encoded data due to outer code, i.e., an MRD code, and perform error free node repair even in the presence of adversarial nodes. This solution, a na\"{\i}ve method, is optimal for a specific choice of parameters. An alternative solution combines our concatenated coding scheme with subspace signature based cryptographic schemes to control the amount (rank) of pollution (error) introduced by an adversarial node. We employ the signature scheme by Zhao et al. \cite{zhao}, which essentially reduces the dynamic error model to the one similar to a static error model, and helps us bound the rank of error introduced by an adversarial node. Note that hash function based solutions have previously been presented in the context of DSS to deal with errors \cite{DiDiHo10,pawar11}. While promising, these hash functions based approaches provide only probabilistic guarantees for pollution containment.

The rest of the paper is organized as follows: In Section~\ref{sec:preliminaries}, we start with a description of our system model and
provide a brief overview of rank-metric, regenerating and locally repairable codes.
In Section~III, we consider the static error model. First, we describe the construction of our storage scheme which allows for bandwidth efficient repairs and prove its error resilience. Second, we consider error resilience in LRCs. We provide a upper bound on the resiliency capacity and present modifications of our scheme which enable optimal local repairs.
In Section IV, we address the dynamic error model. We conclude with Section V, where we list a set of open problems.



\section{Preliminaries}
\label{sec:preliminaries}

\subsection{System model}
\label{subsec:model}
Let $\cM$ be the size of a file $\fv$ over a finite field $\F$ that needs to be stored on a DSS with $n$ nodes. Each node contains
$\alpha$ symbols over $\F$. A data collector reconstructs the original file $\fv$ by downloading the
data stored on any set of $k$ out of $n$ nodes. This property of a DSS is called \emph{`any $k$ out of $n$' property}, and we use $(n,k)$-DSS to represent a storage system which has `any $k$ out of $n$' property.

\subsection{Rank-metric codes}
\label{subsec:rank-metric codes}
All the constructions of coding schemes for distributed storage systems provided in this paper are based on a family of error correcting codes in rank metric, called Gabidulin codes.

Let $\F_{q^m}$ be an extension field of $\F_q$. Since $\F_{q^m}$
can be also considered as an $m$-dimensional vector space over $\F_q$, any element  $\gamma\in\F_{q^m}$ can be represented  as the vector ${\gammav=(\gamma_1,\ldots, \gamma_m)\in \F_q^m}$, such that $\gamma=\sum_{i=1}^{m}b_i\gamma_i$, for a fixed basis $\{b_1,\ldots, b_m\}$ of the field extension. Similarly, any vector $\bf{v}$ $=(v_1, \ldots, v_N)\in\F_{q^m}^N$ can be represented by an $m\times N$ matrix $\Vm=[v_{i,j}]$ over $\F_q$, where each entry $v_i$  of $\vv$ is expanded as a column vector $(v_{i,1},\ldots,v_{i,m})^T$.
The \emph{rank} of a vector $\vv\in\F_{q^m}^N$, denoted by $\rm{rank}(\vv)$, is defined as the rank of the $m\times N$ matrix $\Vm$ over~$\F_q$. Similarly, for two vectors $\vv,\uv \in \F_{q^m}^N$, the \emph{rank distance} is defined by $$d_R(\vv,\uv)=\rm{rank}(\Vm-\Um).$$
An $[N,K,D]_{q^m}$ \textmd{rank-metric code} $\cC\subseteq\F_{q^m}^N$ is a linear block code over $\F_{q^m}$ of length $N$, dimension $K$  and minimum rank distance $D$. A rank-metric code that attains the Singleton bound $D\leq N-K+1$ for the rank metric is called a \emph{maximum rank distance} (MRD) code~\cite{Del78,Gab85,Rot91}.
For $m\geq N$, a construction of MRD codes was presented by Gabidulin~\cite{Gab85}.
Similar to Reed-Solomon codes, Gabidulin codes can be obtained by
evaluation of polynomials; however, for Gabidulin codes, a special family of polynomials \emph{linearized polynomials} is used.
A linearized polynomial $f(x)$ over $\F_{q^m}$ of $q$-degree $t$ has the form $f(x)=\sum_{i=0}^{t}a_ix^{q^i}$,
where $a_i\in \F_{q^m}$, and $a_{t}\neq 0$.

\begin{remark} Note, that evaluation of a linearized polynomial is an $\F_{q}$-linear
transformation from $\F_{q^m}$ to itself, i.e., for any ${a, b \in \F_q}$ and
$\gamma_1, \gamma_2\in\F_{q^m}$, we have $f(a\gamma_1 + b\gamma_2)= af(\gamma_1)+ bf(\gamma_2)$~\cite{MWSl78}.
\end{remark}

A codeword of an $[N,K,D=N-K+1]_{q^m}$ Gabidulin code $\cC^{\rm{Gab}}$ is defined as $\cv = (f(g_1),f(g_2), \ldots, f(g_N))\in\F_{q^m}^N$, for $m\geq N$, where
$f(x)$ is a linearized polynomial over $\F_{q^m}$ of $q$-degree at most $K-1$ with $K$ message symbols as its coefficients, and the evaluation points $g_1,\ldots, g_N \in \F_{q^m}$ are linearly independent over $\F_q$~\cite{Gab85}.

\subsubsection{Rank Errors and Rank Erasures Correction}
\label{subsubsec:rank error}

Let $\cC^{\rm{Gab}}\subseteq \F_{q^m}^N$ be a Gabidulin code with the minimum distance~$D$.
Let $\mathbf{c}\in \cC^{\rm{Gab}}$ be the transmitted
codeword and let $\mathbf{r}=\mathbf{c}+\mathbf{e_{total}}$ be the received word.
The code $\cC^{\rm{Gab}}$ can correct any vector error of the form
\begin{align}
\label{eq:rank error}
\mathbf{e_{total}}=\mathbf{e_{error}}+\mathbf{e_{erasure}} = (e_1\mathbf{u_{1}}+ \ldots +e_t\mathbf{u_{t}})+ (r_1\mathbf{v_{1}}+ \ldots +r_s\mathbf{v_{s}}),
\end{align}
 as long as $2t+s\leq D-1$.
The  first part $\mathbf{e_{error}}$ is called  a \emph{rank error}
of rank $t$, where $e_i\in\F_{q^m}$ are linearly independent over the base field $\F_q$,
unknown to the decoder, and $\mathbf{u}_i\in\F_q^N$
are linearly independent vectors of length~$N$, unknown to the decoder.
The second part $\mathbf{e_{erasure}}$ is called a\emph{ rank erasure}, where
$r_i\in\F_{q^m}$ are linearly independent over the base field $\F_q$,
unknown to the decoder, and $\mathbf{v}_i\in\F_q^N$
are linearly independent vectors of length~$N$,  and known to the decoder. (In this paper we are interested only in rank \emph{column} erasures, so the rank \emph{row} erasures are not considered). \cite{Gab85,SiKs09} present decoding algorithms for rank-metric codes.

Note that an $[N, K, D]_{q^m}$ MRD code is in particular an MDS code over $\F_{q^m}$, and then can correct erasures of any $D-1$ symbols in a codeword. Such symbols erasures are in particular column erasures, when the codeword is considered in the matrix representation.

\subsection{Vector codes}
A linear $[n, \cM, d_{\min}, \alpha]_{\F}$ vector code $C$ of length $n$ over a finite field $\F$  is defined as a linear subspace of $\F^{\alpha n}$ of dimension $\cM$.
The symbols $\cv_i$, $1\leq i\leq n$, of a codeword $\cv\in C$ belong to $\F^{\alpha}$, i.e., are vectors or \emph{blocks} of size $\alpha$.
The minimum distance $d_{\min}$ of $C$ is defined as the minimum Hamming distance over $\F^{\alpha}$.

Vector codes are  also known as \emph{array }codes.
An $[n, \cM, d_{\min}, \alpha]_{\F}$ array code is called {\em MDS array code} if $\alpha|\cM$ and $d_{\min} = n - \frac{\cM}{\alpha} + 1$.
 Constructions for MDS array codes can be  found e.g. in~\cite{BlRo99, CaBr06}.

 In order to store a file $\fv\in \F^{\cM}$  on a DSS using a vector code $C$, $\fv$ is first encoded to a codeword $\cv = (\cv_1, \cv_2,\ldots, \cv_n) \in C$.
Each symbol (block) $\cv_i\in \F^{\alpha}$ of the codeword is then stored on a distinct node.
 \subsection{Regenerating codes}
 \label{subsec: regenerating codes}
 Regenerating codes are a family  of vectors codes for an $(n,k)$-DSS  that allow for efficient repair of failed nodes.
When a node fails, its content can be reconstructed by downloading $\beta\leq \alpha$ symbols from each node in a set of
$d$, $k\leq d\leq n-1$, surviving nodes.
We denote by $[n,\cM,d_{\min}=n-k+1,\alpha,\beta,d]_{\F}$ a linear regenerating code. Note that data can be reconstructed from any $k$ symbols of a codeword and therefore the minimum distance is $d_{\min}=n-k+1$.
A trade-off between storage per node $\alpha$ and {\em repair bandwidth} $\gamma \triangleq d\beta$ was established in~\cite{dimakis}.
 Two classes of codes that achieve two extreme points of this trade-off are known as {\em minimum storage regenerating (MSR)} codes and {\em minimum bandwidth regenerating (MBR)} codes. The parameters $(\alpha,\gamma)$ for MSR and MBR codes are given by $\left(\frac{\cM}{k},\frac{\cM d}{k(d-k+1)}\right)$ and $\left(\frac{2\cM d}{2kd-k^2+k},\frac{2\cM d}{2kd-k^2+k}\right)$, respectively~\cite{dimakis}.

 In this paper we focus on the family of linear MSR codes. Note that these codes are also MDS array codes. We refer to these codes as $[n,\cM=\alpha k,d_{\min}=n-k+1,\alpha,\beta=\frac{\alpha}{d-k+1},d]_{\F}$ optimal repair MDS array codes.

\begin{figure*}[!t]
        \centering
        \begin{subfigure}[b]{0.47\textwidth}
                \centering
                \includegraphics[width=\columnwidth]{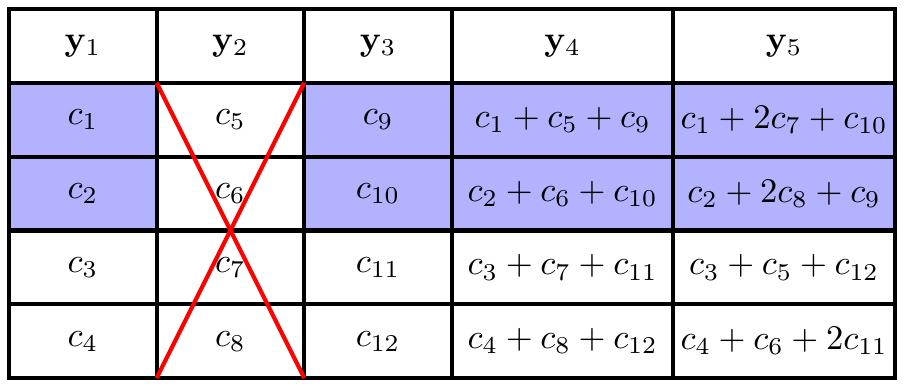}
                \caption{}
                \label{fig:zigzag_des}
        \end{subfigure}%
        \qquad
        \begin{subfigure}[b]{0.47\textwidth}
                \centering
                \includegraphics[width=\columnwidth]{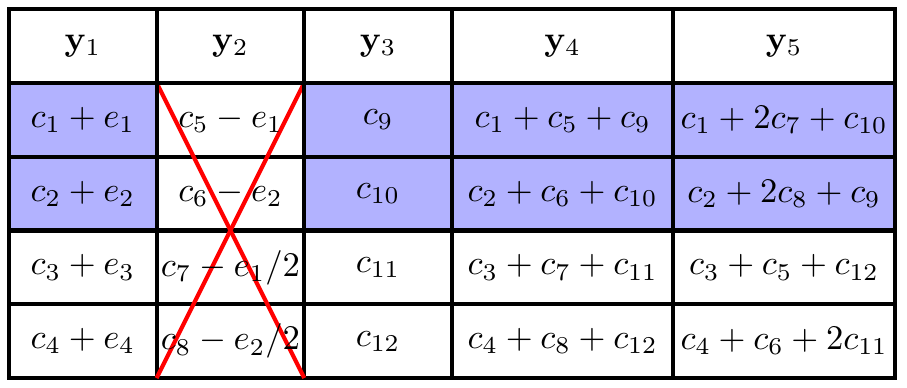}
                \caption{}
                \label{fig:zigzag_repair}
        \end{subfigure}
        \caption{Illustration of the second node repair process in $(5,3)$ Zigzag code: (a) for error free system, (b) for system with erroneous information at the first storage node.}\label{fig:zigzag}
\end{figure*}

Let $\mathbf{x}=(\mathbf{x}_1,\mathbf{x}_2,\ldots,\mathbf{x}_k)\in \F^{\alpha k}$, $k=\frac{\cM}{\alpha}$, be an information vector (a file), $\mathbf{x}_i \in \F^\alpha$ is a block of size $\alpha$, for all $1\leq i\leq k$.
These $k$ blocks are encoded into $n$ encoded blocks $\mathbf{y}_i\in \F^\alpha$, $1\leq i\leq n$, stored in $n$ nodes
 of size $\alpha$,  in the following way:
$$\mathbf{y}=\mathbf{x}\mathbf{G},
$$
where $\mathbf{y}=(\mathbf{y}_1,\mathbf{y}_2,\ldots,\mathbf{y}_n)$ and the generator matrix $\mathbf{G}$  of an MSR code is an $k\times n$ block matrix over $\F$ with blocks of size $\alpha\times\alpha$ given by:
\begin{small}
$$\mathbf{G}=\left[\begin{array}{cccc}
            A_{1,1} & A_{1,2} & \ldots & A_{1,n}  \\
            A_{2,1} & A_{2,2} & \ldots & A_{2,n}  \\
            \vdots & \vdots & \ddots & \vdots \\
            A_{k,1} & A_{k,2} & \ldots & A_{k,n}
          \end{array}
\right].
$$
\end{small}
Note that any blocks submatrix of $\mathbf{G}$ of size $k\times k$ is of the full rank. For a systematic code, we have $\mathbf{y}_i=\mathbf{x}_i$, for $1\leq i\leq k$, and a parity node $j$, $k+1\leq j\leq n$, stores $\mathbf{y}_{j}=\sum_{i=1}^k\mathbf{x}_iA_{i,j}$.

When a node $i$ fails, each node $j$ in a set of $d$ surviving nodes that are contacted to repair this node sends  a vector of length $\beta$ given by $\mathbf{y}_jV_{j,i}$, where $V_{j,i}\in \F^{\alpha\times\beta}$ is a repair matrix used by node $j$ to perform repair of node~$i$.

\subsubsection{Example of Optimal Repair MDS Array Codes}
\label{subsubsec:MDS examples}
In the following, we present an example of  optimal repair MDS array codes for DSS, which we use in Section~\ref{subsec:constructionMSR} to illustrate our coding scheme.

\begin{example}
\label{ex:zig-zag}
(Zigzag code~\cite{zigzag13}).
This class of optimal repair MDS array codes \cite{zigzag13} is  based on generalized permutation matrices.
For an $[n=5,\cM=12,d_{\min}=3,\alpha=4,\beta=2,d=4]_q$ zigzag code presented in Fig.~\ref{fig:zigzag_des}, first three nodes are systematic nodes which store the data $(c_1, c_2,\ldots, c_{12})$, each one $\alpha=4$ information symbols.  The block generator matrix for this code is given by

\begin{small}
$$
\textbf{G}=\left[\begin{array}{ccccc}
               I & \emph{0} & \emph{0} & I & I \\
               \emph{0} & I & \emph{0} & I & A_2 \\
               \emph{0} & \emph{0} & I & I & A_3
             \end{array}
\right],$$
where $I$ and $\emph{0}$ denote the identity matrix and all-zero matrix of size $4\times 4$, respectively; and
$$A_2 = \left[\begin{array}{cccc}
              0 & 0 & 1 & 0 \\
              0 & 0 & 0 & 1 \\
              2 & 0 & 0 & 0 \\
              0 & 2 & 0 & 0
            \end{array}
\right],~A_3=\left[\begin{array}{cccc}
              0 & 1 & 0 & 0 \\
              2 & 0 & 0 & 0 \\
              0 & 0 & 0 & 2 \\
              0 & 0 & 1 & 0
            \end{array}
\right].
$$
\end{small}
Fig.~\ref{fig:zigzag_des} describes node repair process  for this code. When the second node fails, the newcomer node downloads the symbols from the shaded locations at the surviving nodes.
\end{example}

\subsection{Locally repairable codes}
\label{subsec:LRC}

Locally repairable codes (LRCs) are a family of codes for DSS that allow to reduce the number of nodes participating in the node repair process. LRCs are defined as follows.

%
We say that an $[n, \cM, d_{\min}, \alpha]_{\F}$  vector code $C$ is a $(r, \delta, \alpha)$ \emph{locally repairable code} if for each symbol $\cv_i \in \F^{\alpha}$, $1\leq i\leq n$, of a codeword $\cv = (\cv_1,\ldots, \cv_n)\in C$,
there exists a set of indices $\Gamma(i)$ such that
\begin{itemize}
\item $i\in\Gamma(i)$
\item $|\Gamma(i)|\leq r+\delta - 1$
\item  $d_{\min}(C|_{\Gamma(i)})\geq \delta$,
where $C|_{\Gamma(i)}$ denotes the puncturing of $C$ on the set $[n]\backslash\Gamma(i)$ of coordinates.
\end{itemize}

Note that the last two properties imply that each element $j \in \Gamma(i)$ can be written as a function of a set of at most $r$ elements in $\Gamma(i)$ (not containing $j$).

It was proven in~\cite{RKSV12,KPLV12} that the minimum distance of an  $(r, \delta, \alpha)$ LRC of length $n$ and dimension $\cM$ satisfies
\begin{align}
\label{eq:upp_bound}
&d_{\min}(C) \leq n -
\left\lceil\frac{\mathcal{M}}{\alpha}\right\rceil + 1 - \left(\left\lceil\frac{\mathcal{M}}{r\alpha}\right\rceil-1\right)(\delta - 1).
\end{align}

We say that an $(r,\delta, \alpha)$ LRC for an $(n, n-d_{\min}+1)$-DSS is \emph{optimal} if its minimum distance $d_{\min}$ attains the bound~(\ref{eq:upp_bound}).
 In this paper we consider the construction of optimal $(r, \delta, \alpha)$ LRCs  from ~\cite{SRV12} (for $\alpha=1$) and its generalization from~\cite{RKSV12}. This construction is based on concatenation of Gabidulin codes and MDS array codes:
 a file $\fv$ over $\F=\F_{q^N}$ of size $\cM \geq r\alpha$ is first encoded using an $[N,\cM,D]_{q^N}$ Gabidulin code. The codeword  of the Gabidulin code is then partitioned into local disjoint groups, each of size $r\alpha$, and each local group is then encoded using an $[(r+\delta-1),r,\delta,\alpha]_q$ MDS array code over~$\mathbb{F}_q$. (If $r\alpha \nmid N$, there is a group of size $r'\alpha <r\alpha$, which is then encoded by using an $[(r'+\delta-1),r',\delta,\alpha]_q$ MDS array code). A code obtained by this construction is optimal if $ (r+\delta-1)|n$,  $N= \frac{n r \alpha}{r+\delta-1}$ and $q\geq (r+\delta-1)$. For a case when $ (r+\delta-1)\nmid  n$, see the details in~\cite{RKSV12}.

\subsubsection{Example of Optimal LRC}
\label{subsubsec:LRC example}
The following example of  optimal LRC will be used further for illustration of our error-correcting coding scheme.

\begin{example}
\label{ex:array}
We consider a DSS with the following parameters:
$$(\cM,n,r,\delta,\alpha)=(28,15,3,3,4).
$$

Let $\fv$ be a file of size $28$ over $\F_{q^{36}}$, $q\geq 5$.
Let $N=\frac{15\cdot 3\cdot 4}{3+3-1}=36$ and $(a_1,\ldots, a_{12}, b_1,\ldots, b_{12}, c_1,\ldots, c_{12})$ be a codeword  of a $[36,28,9]_{q^{36}}$ Gabidulin code $\cC^{\textmd{Gab}}$, which is obtained by encoding $\cM = 28$ symbols over $\F=\F_{q^{36}}$ of the original file. The Gabidulin codeword is then partitioned into three groups $(a_1,\ldots, a_{12})$, $(b_1,\ldots, b_{12})$, and $(c_1,\ldots, c_{12})$. Encoded symbols in each group are stored on three storage nodes as shown in Fig.~\ref{fig:construction}. In the second stage of encoding, a $[5,3\cdot4,3,4]_{q}$  MDS array code over $\F_{q}$  is applied on each local group to obtain $\delta-1 = 2$ parity nodes per local group. The coding scheme is illustrated in Fig.~\ref{fig:construction}.
\begin{figure}[h]
 \centering
 \includegraphics[width=.8\columnwidth]{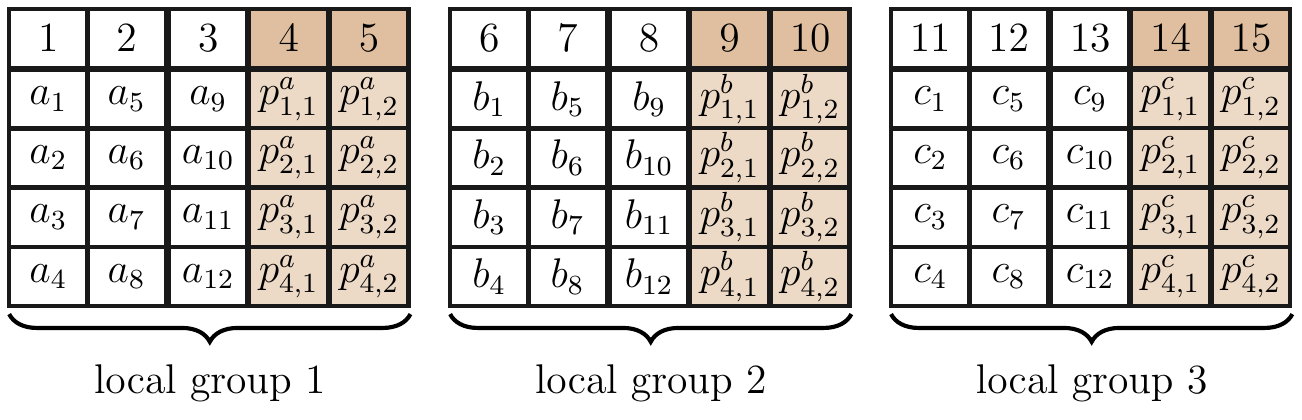}
 \caption{Example of an $(r = 3, \delta = 3, \alpha = 4 )$ LRC with $n = 15$ and $d_{\min}=5$.} \label{fig:construction}
\end{figure}

By~(\ref{eq:upp_bound}) we have $d_{\min}(C^{\rm{loc}})\leq 5$. One can check that every $4$ nodes failures (which is equivalent to at most $8$ rank erasures~\cite{RKSV12}) can be corrected by this code, and thus it has minimum distance $5$.
In addition, when a single node fails, it can be repaired by using the data stored on any three other nodes from the same group.
\end{example}

\subsection{Adversarial errors models}

We assume the presence of an omniscient active adversary
 which can observe all nodes of $(n,k)$-DSS and has full knowledge of the coding scheme employed by the system but can modify at most $t$ nodes, $2t<k$, similarly to~\cite{pawar11}.

Our goal is to design a coding scheme for an $(n,k)$-DSS that allows to deliver the original data to a data collector  even in presence  of $t$ nodes ($t< k/2$) which are fully controlled by an adversary.

In this paper we consider two classes of adversarial attacks:
\begin{enumerate}
\item {\em Static errors:}  an  adversary replaces the content of each affected node with nonsensical information \emph{only once}. The affected node uses this \emph{same} polluted information during all subsequent repair and data collection processes.
\item {\em Dynamic errors:} an  adversary may replace the content of an affected node \emph{each time} the node is asked for the data during data collection or repair process.
\end{enumerate}

As it was mentioned in Introduction,
static errors represent a common type of data corruption due to wear out of storage devices, such as latent disk errors or other physical defects of the storage media, where the data stored on a node is permanently distorted. 
Dynamic errors represent the most general scenario and
hence is more difficult to manage as compared to static errors.

The upper bound on the amount of data that can be stored  reliably on an $(n,k)$-DSS with $t < \frac{k}{2}$ general corrupted nodes (i.e., $t$ dynamic errors) which employs a regenerating code was presented by Pawar et al.~\cite{pawar11}. This amount of data is called \textit{resilience capacity} and denoted by $C_t(\alpha, \beta,d)$.  The bound on the resilience capacity is given by
\begin{equation}\label{bound}
C_t(\alpha, \beta,d)\leq \sum_{i=2t+1}^{k}\min\{(d-i+1)\beta,\alpha\}.
\end{equation}


\begin{remark}
\label{rm:static_vs_general}
Although the static error model is less general than
the  model considered in~\cite{pawar11}, the upper bound in~(\ref{bound})
applies to the static model as well. Pawar et al.~\cite{pawar11}
obtain the upper bound in \eqref{bound} by evaluating a cut of the information flow graph
corresponding to a particular node failure sequence,
pattern of nodes under adversarial attack and data collector.
This information flow graph is also valid in the context of the static error model.
Consequently, its cut which provides an upper bound on the information
flow and represents the amount of data that can be reliably stored on
DSS is applicable to the static error model as well.
As shown in this paper, this bound is tight for static model at the MSR point,
which might not be the case for the general error model considered
in~\cite{pawar11}.
\end{remark}

\section{Coding Schemes for Static Error Model}
\label{sec:static}

In this section we consider the static error model. First, we
present our coding scheme for DSS which employ optimal repair  MDS array codes. Next, we provide an upper bound on the resilience capacity for locally repairable codes and present a coding scheme for DSS which employ optimal locally repairable codes. We prove the error tolerance of the proposed schemes under the \emph{static error model}.  We illustrate the idea by using examples from Section~\ref{subsubsec:MDS examples} and Section~\ref{subsubsec:LRC example} and prove that these constructions are optimal for the static error~model. Finally, we consider error resilience in LRCs with local regeneration.


\subsection{Error resilience in optimal repair MDS array codes}
\label{subsec:constructionMSR}

\subsubsection{Construction of Error-Correcting Optimal Repair MDS Array Codes}
\label{subsubsec:constructionMSR}

First, we present a concatenated coding scheme which is based
on a Gabidulin code as an outer code and an optimal repair MDS array code as an inner code:

\textbf{Construction I.}
Let  $n,k,\alpha,\beta,d$ be the given parameters and let
$\cM, N, m$ be the positive integers such that
$N=\alpha k$ and $\cM\leq N\leq m$.
Consider a file $\fv$ over $\F=\F_{q^m}$ of size $\cM$. We encode the file in two steps before storing it on an $(n,k)$-DSS. First, the file is encoded using an $[N, \cM,D=N-\cM+1]_{q^{m}}$ Gabidulin code $\cC^{\textmd{Gab}}$, with $N$ evaluation points $g_1,\ldots,g_N$ from $\F$, which are linearly independent over $\F_q$, as described in Section~\ref{subsec:rank-metric codes}. Second, the codeword $\textbf{c}_{\fv}=(f(g_1),\ldots,f(g_N))\in \cC^{\textmd{Gab}}$ that corresponds to the file $\fv$  is  encoded using an
$[n, N, d_{\min}, \alpha, \beta,d]_{\F_q}$ optimal repair MDS array (systematic)
 code $C^{\textmd{MDS}}$ over $\F_q$, where $N=\alpha k$, $d_{\min}=n-k+1$, and
$\beta=\frac{\alpha}{d-k+1}$, as described in Section~\ref{subsec: regenerating codes}: the codeword $\textbf{c}_{\fv}$ is partitioned into $k$ blocks of size $\alpha$, which are encoded into $n$ blocks of size $\alpha$ (over $\F$) and then stored on $n$ system nodes.

Note, that we use the optimal repair MDS array code defined $C^{\textmd{MDS}}$
over $\F_q\subseteq \F$, i.e., its generator matrix is over $\F_q$, and during the process of node
repair, a set of surviving nodes transmits linear combinations of the stored elements  with the coefficients from $\F_q$.

The following lemma is useful to show the properties of the constructed code.

\begin{lemma}
\label{lm:equivalent gabidulin code}
Any $k$  different nodes of the proposed scheme contain evaluation of the linearized polynomial $f(x)$ (corresponding to the given file) at $\alpha k$ elements of $\F$ which are linearly independent over $\F_q$. In other words, the content of any $k$ nodes corresponds to a codeword of an $[N=\alpha k,\cM, D=N-\cM+1]_{\F}$ Gabidulin code $\widehat{\cC}^{\textmd{Gab}}$.
\end{lemma}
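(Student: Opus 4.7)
The plan is to trace how the $\F_q$-linear structure of the inner MDS array code interacts with the $\F_q$-linearity of evaluations of the linearized polynomial $f(x)$. Write the systematic input to the inner code as $\mathbf{x}=(f(g_1),\ldots,f(g_N))$ and the stored output as $\mathbf{y}=\mathbf{x}\mathbf{G}$, where $\mathbf{G}$ is the $N\times n\alpha$ matrix (obtained by expanding the block generator) with entries in $\F_q$. For each column index $v$ one then has
\[
y_v \;=\; \sum_{u=1}^{N}\, f(g_u)\,G_{u,v}\;=\; f\!\left(\sum_{u=1}^{N} G_{u,v}\, g_u\right) \;=\; f(h_v),
\]
where $h_v:=\sum_u G_{u,v}\,g_u\in\F$ and the middle equality uses the $\F_q$-linearity of evaluations of a linearized polynomial (the Remark in Section~\ref{subsec:rank-metric codes}). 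So the entire set of $n\alpha$ stored symbols already has the form $f(h_v)$ for points $h_v$ that are $\F_q$-linear combinations of $g_1,\ldots,g_N$.

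Next, fix any $k$ of the $n$ nodes. The corresponding $k\alpha=N$ columns of $\mathbf{G}$ form a square $N\times N$ matrix $\mathbf{G}'$ over $\F_q$, and the optimal repair MDS array code property (any $k\times k$ block submatrix of the block generator has full rank) guarantees that $\mathbf{G}'$ is invertible over $\F_q$. Denote by $h_{v_1},\ldots,h_{v_N}$ the evaluation points attached to these $N$ stored symbols; then the relation $h_{v_j}=\sum_u g_u\,(\mathbf{G}')_{u,j}$ describes an $\F_q$-linear change of basis from $(g_1,\ldots,g_N)$ to $(h_{v_1},\ldots,h_{v_N})$.

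I then argue linear independence directly. Suppose $\sum_{j} c_j h_{v_j}=0$ with $c_j\in\F_q$. Substituting and swapping sums yields $\sum_{u} g_u\bigl(\sum_j (\mathbf{G}')_{u,j}\,c_j\bigr)=0$, and since $g_1,\ldots,g_N$ are $\F_q$-linearly independent by hypothesis, each inner coefficient vanishes, i.e.\ $\mathbf{G}'\mathbf{c}=\mathbf{0}$. Invertibility of $\mathbf{G}'$ over $\F_q$ forces $\mathbf{c}=\mathbf{0}$. Hence $h_{v_1},\ldots,h_{v_N}$ are $\F_q$-linearly independent elements of $\F$, so they constitute a valid set of evaluation points for an $[N,\cM,N-\cM+1]_{\F}$ Gabidulin code $\widehat{\cC}^{\textmd{Gab}}$, and the content $\bigl(f(h_{v_1}),\ldots,f(h_{v_N})\bigr)$ of the chosen $k$ nodes is by construction a codeword of $\widehat{\cC}^{\textmd{Gab}}$.

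The main conceptual obstacle is making sure both ingredients are used in the right places: the code $C^{\textmd{MDS}}$ is defined over the subfield $\F_q$ (not merely over $\F$), which is what lets the $\F_q$-linearity of $f$ pull the generator entries through the evaluation; and the Gabidulin code structure requires $\F_q$-linear independence of evaluation points, which is exactly what the full-rankness of every $k\times k$ block submatrix of $\mathbf{G}$ delivers after the change-of-basis argument. Beyond this, the proof is a short computation; no induction or case analysis is needed.
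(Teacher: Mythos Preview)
Your proof is correct and follows essentially the same route as the paper: both use the $\F_q$-linearity of $f$ to rewrite each stored symbol as $f$ evaluated at an $\F_q$-linear combination of the original points, then invoke the full-rankness of the $\alpha k\times\alpha k$ submatrix of the MDS array generator to conclude that the new evaluation points are $\F_q$-linearly independent. The only difference is cosmetic: the paper states the independence as an ``iff'' consequence of $G_S$ being invertible, whereas you unpack that step explicitly via $\mathbf{G}'\mathbf{c}=\mathbf{0}\Rightarrow\mathbf{c}=\mathbf{0}$.
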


\begin{proof}
Let $g_1,\ldots, g_N$ be the evaluation points of the Gabidulin code $\cC^{\textmd{Gab}}$ used in the construction.
Consider a set $S=\{i_1,\ldots i_k\}$ of nodes. The content of these nodes is given by $\textbf{c}_{\fv}
G_S$, where $G_S=(a_{i,j})_{i,j=1}^{\alpha k}$ is a $\alpha k\times \alpha k$ submatrix of the generator matrix $\textbf{G}$ of $C^{\textmd{MDS}}$ which corresponds to the nodes in $S$, i.e., which consists of columns $\{(i_j-1)\alpha+1,\ldots, i_j\alpha\}_{j=1}^k$.
The $j$th coordinate of the vector  $\textbf{c}_{\fv}G_S$ of length $\alpha k$ is given by
$(\textbf{c}_{\fv}G_S)_j=(f(g_1),\ldots,f(g_{\alpha k}))(a_{1,j},\ldots,a_{\alpha k,j})^T=\sum_{i=1}^{\alpha k}a_{i,j}f(g_i)=f(\sum_{i=1}^{\alpha k}a_{i,j}g_i)$, where $(a_{1,j},\ldots,a_{\alpha k,j})^T$ is the $j$th column of $G_S$ and the last equality follows from the $\F_q$-linearity of $f$.
Now consider the vector of the new evaluation points of $f$: $(\widehat{g_1},\ldots \widehat{g}_{\alpha k})=(\sum_{i=1}^{\alpha k}a_{i,1}g_i,\ldots,\sum_{i=1}^{\alpha k}a_{i,\alpha k}g_i)=(g_1,\ldots, g_{\alpha k})G_S$.

Note that since $C^{\textmd{MDS}}$ is an MDS array code, then $G_S$ is a full rank matrix over $\F_q$.
Therefore, $\{\widehat{g_1},\ldots \widehat{g}_{\alpha k}\}$ are linearly independent over $\F_q$ if and only if $\{g_1,\ldots, g_{\alpha k}\}$ are linearly independent over $\F_q$. Therefore, the observations $\cv_{\fv}G_{S}$ are essentially evaluations of linearized polynomial at $k\alpha$ linearly independent points over $\F_q$ from $\F$, which correspond to a codeword of an $[N=\alpha k,\cM, D=N-\cM+1]_{\F}$ Gabidulin code $\widehat{\cC}^{\textmd{Gab}}$.
Note that $\widehat{\cC}^{\textmd{Gab}}$ has the same parameters as ${\cC}^{\textmd{Gab}}$, however, the evaluation points of these two codes are different.
\end{proof}

The following theorem shows that if $2t\alpha+1\leq D$, then the proposed scheme
tolerates up to $t$ erroneous nodes, i.e., from any $k$ nodes a data collector can retrieve the original data even in presence of an adversary which controls (modifies) $t$ nodes. Note that the node repairs are performed exactly in the same way as in an error-free DSS.

\begin{theorem}
\label{thm:resilience MSR}
Let $t$ be the number of erroneous nodes in the system based on concatenation of Gabidulin and optimal repair MDS array codes from Construction I.
If $2t\alpha+1\leq D$, then the original data can be recovered from any $k$ nodes.
\end{theorem}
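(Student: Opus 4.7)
The plan is to reduce the problem to decoding rank errors in the equivalent Gabidulin code supplied by the preceding lemma. Given any $k$ nodes contacted by the data collector, Lemma~\ref{lm:equivalent gabidulin code} tells us that the $\alpha k$ symbols stored on these nodes form a codeword $\widehat{\cv}_\fv$ of an $[N=\alpha k,\cM,D=N-\cM+1]_{q^m}$ Gabidulin code $\widehat{\cC}^{\textmd{Gab}}$ (with evaluation points that are a $\F_q$-linear image of those of $\cC^{\textmd{Gab}}$ but still carry the same message polynomial $f(x)$). Because the encoded file is the same linearized polynomial, recovering $\fv$ reduces to correctly decoding $\widehat{\cv}_\fv$ from the possibly corrupted vector that the data collector observes.

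Next I would bound the rank of the adversarial error. Among the $k$ contacted nodes at most $t$ are controlled by the adversary, and each such node contributes $\alpha$ symbols over $\F=\F_{q^m}$. Writing the received word as $\widehat{\cv}_\fv+\ev$ with $\ev\in\F^{N}$, and representing $\ev$ as an $m\times N$ matrix $\Em$ over $\F_q$ in the usual way, the columns of $\Em$ corresponding to uncorrupted nodes are zero, so $\Em$ has at most $t\alpha$ nonzero columns. Hence $\mathrm{rank}(\ev)\leq t\alpha$. The hypothesis $2t\alpha+1\leq D$ is exactly $2\,\mathrm{rank}(\ev)\leq D-1$, which is the decoding radius of $\widehat{\cC}^{\textmd{Gab}}$ in the rank metric.

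Therefore, applying the Gabidulin rank-error decoder (cf.\ the decoding guarantee recalled in Section~\ref{subsubsec:rank error}) to the received vector, the data collector recovers $\widehat{\cv}_\fv$, and then extracts the $\cM$ message coefficients of $f(x)$, i.e., the original file $\fv$. Since the choice of $k$ nodes was arbitrary, this completes the argument.

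The only subtle step is the rank bound on $\ev$: one must be careful that the adversary's pollution on a given node, although an arbitrary vector in $\F^\alpha$, contributes at most $\alpha$ nonzero columns (not $\alpha$ nonzero entries in a single column) in the $m\times N$ matrix representation over $\F_q$, which immediately yields $\mathrm{rank}(\ev)\leq t\alpha$ regardless of how the $t$ nodes are chosen. Everything else is a direct invocation of Lemma~\ref{lm:equivalent gabidulin code} and the Gabidulin decoding guarantee. I do not expect any serious obstacle beyond making this rank accounting explicit.
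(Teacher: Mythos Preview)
Your argument has a genuine gap: you have not accounted for error propagation through node repairs, which is the central issue the theorem addresses. You write that ``the columns of $\Em$ corresponding to uncorrupted nodes are zero,'' but in the static error model this is false. The adversary directly corrupts only $t$ nodes, but subsequently other (honest) nodes may fail and be repaired by downloading $\F_q$-linear combinations from surviving nodes, including adversarial ones. After several repairs, far more than $t$ of the $k$ nodes the data collector contacts can carry nonzero pollution, so the error matrix $\Em$ can have many more than $t\alpha$ nonzero columns. Your column-counting bound on $\mathrm{rank}(\ev)$ therefore does not hold.

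The correct rank bound, which the paper's proof makes explicit, comes from a different source. Because every repair is an $\F_q$-linear operation, the aggregate error at any node is an $\F_q$-linear combination of the $t\alpha$ original error symbols $\mathbf{e}=(\mathbf{e}^{i_1},\ldots,\mathbf{e}^{i_t})\in\F_{q^m}^{t\alpha}$ injected by the adversary. Hence the error the data collector sees has the form $\mathbf{e}\,\widehat{\mathbf{B}}$ for some matrix $\widehat{\mathbf{B}}\in\F_q^{t\alpha\times k\alpha}$, and $\mathrm{rank}_{\F_q}(\mathbf{e}\,\widehat{\mathbf{B}})\leq \mathrm{rank}_{\F_q}(\mathbf{e})\leq t\alpha$ regardless of how many columns of $\widehat{\mathbf{B}}$ are nonzero. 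This is precisely why a rank-metric outer code succeeds where a Hamming-metric code would not; your proposal, as written, would apply equally well to an MDS outer code and thus cannot be complete.
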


\begin{proof}
Let $\mathbf{c}_{\fv}\in \F$ be the codeword
in $\cC^{\textmd{Gab}}$ which corresponds to the file $\fv$, and let
$(\mathbf{x}_1,\mathbf{x}_2,\ldots,\mathbf{x}_k)$, $\textbf{x}_i\in\F^{\alpha}$, be
the partition of $\textbf{c}_{\fv}$ into $k$ parts of size $\alpha$ each.
Let $(\mathbf{y}_1,\mathbf{y}_2,\ldots,\mathbf{y}_n)$, $\textbf{y}_i\in\F^{\alpha}$ be the encoded by $C^{\textmd{MDS}}$ blocks stored on $n$ nodes.
Let $S=\{i_1,i_2,\ldots,i_t\}$  be the set of indices  of the erroneous nodes.
Hence  the  $i_j$th node, $i_j\in S$, contains  $\sum_{\ell=1}^{k}\mathbf{x}_{\ell}A_{\ell,i_{j}}+\mathbf{e}^{i_j}$,
where $\mathbf{e}^{i_j}=(e^{i_j}_1,e^{i_j}_2,\ldots,e^{i_j}_{\alpha})\in \F^\alpha$ denotes an adversarial error introduced by the $i_j$th node, and $A_{\ell,i_{j}}\in \F_q^{\alpha \times \alpha}$ are the blocks of the generator matrix of $C^{\textmd{MDS}}$.
When the failed nodes are being repaired, the errors from adversarial nodes propagate
to the repaired nodes. In particular, $\ell$th node, $1\leq \ell \leq n$, contains
$\sum_{j=1}^{k}\mathbf{x}_jA_{j,\ell}+\sum_{j=1}^{t}\mathbf{e}^{i_j}B_{\ell}^{i_j}$,
where $B_{\ell}^{i_j}\in \F_q^{\alpha\times \alpha}$ represents the propagation of error $\mathbf{e}^{i_j}$ and
depends on the specific choice of an optimal repair MDS array code.
Suppose a data collector contacts $k$ nodes indexed by $\cD \subset [n]$ and
downloads $\sum_{j=1}^{k}\mathbf{x}_jA_{j,i}+\sum_{j=1}^t \mathbf{e}^{i_j}B_{i}^{i_j}$ from node $i\in \cD$.
\begin{itemize}
  \item Case 1. If these $k$ nodes are all systematic nodes, then
we obtain $(\mathbf{x}_1,\mathbf{x}_2,\ldots,\mathbf{x}_k)+\mathbf{eB}=\textbf{c}_{\fv}+\mathbf{eB}$, where $\mathbf{B}\in \F_q^{\alpha t\times \alpha k}$ is the blocks matrix with the $(j,\ell)$th block of $\mathbf{B}$ is given by $B_{\ell}^{i_j}$, $1\leq\ell\leq k$, $1\leq j\leq t$, and $\mathbf{e}=(\mathbf{e}^{i_1}, \mathbf{e}^{i_2},\ldots, \mathbf{e}^{i_t})\in \F^{\alpha t}$.
  \item Case 2. If not all the $k$ nodes are systematic, we obtain $\widehat{\textbf{c}_{\fv}}+\mathbf{e\widehat{B}}$, where the $(j,\ell)$th block of  the blocks matrix
$\widehat{B}\in\F_q^{\alpha t\times \alpha k}$   is given by $B_{\ell}^{i_j}$, $\ell\in \cD$, $1\leq j\leq t$, $\mathbf{e}$ is defined as previously, and
  $\widehat{c_{\fv}}$ is the codeword of the Gabidulin code $\widehat{\cC}^{\textmd{Gab}}$ with the same parameters as $\cC^{\textmd{Gab}}$, according to Lemma~\ref{lm:equivalent gabidulin code}.
\end{itemize}

In any case, since $\text{rank}(\mathbf{e})\leq t\alpha$ over $\F_q$, and $D\geq 2t\alpha+~1$, a Gabidulin code  can correct this error; consequently, by applying erasure decoding of the MDS array code (for Case 2) used as inner code the original data can be recovered.
\end{proof}

Now we illustrate the idea of the construction with the help of
the example of the optimal repair MDS array code for an $(5,3)$-DSS, presented in Section~\ref{subsubsec:MDS examples}. We consider the case where an adversary pollutes the information stored on a single storage node and demonstrate  that the rank of the error introduced by the adversary does not increase due to node repair dynamics under  the static error model. Hence, a  data collector can recover the correct original information using  decoders for an MRD code $\cC^{\textmd{Gab}}$ and MDS array code $C^{\textmd{MDS}}$.

\begin{example}

 Let $\cC^{Gab}$ be an $[12,4,9]_{q^{12}}$ Gabidulin code and let $C^{\textmd{MDS}}$ be the $[5,12,3,4,2,4]_q$ zigzag code from Example~\ref{ex:zig-zag}.
 First we encode a file $\fv=(f_0,f_1,f_2,f_3)\in \F_{q^{12}}^4$ into a codeword $\textbf{c}_{\fv}=(c_1,\ldots, c_{12})\in\cC^{\textmd{Gab}}$  by 
 $c_i=f(g_i)$,
 $1\leq i\leq 12$, where $\{g_i,\ldots g_{12}\}\subseteq \F_{q^{12}}$ are linearly independent over $\F_q$ and $f(x)=\sum_{j=0}^{3}f_jx^{q^j}$.
Then we encode $\textbf{c}_{\fv}$ again by using $C^{\textmd{MDS}}$. The
first three systematic nodes of $(5,3)$-DSS store a codeword $\textbf{c}_{\fv}$, i.e., the content stored in $i$th systematic node, $1\leq i\leq 3$, is $\mathbf{y}_i = (c_{4(i-1)+1},\ldots, c_{4i}) \in \F^{4}$.
Let us assume that an adversary attacks the first storage node and introduces erroneous information. The erroneous information at the first node can be modeled as $\mathbf{y}_1 + \mathbf{e} = (c_1, c_2, c_3, c_4)+(e_1, e_2, e_3, e_4)$. Now assume that the second node fails. The system is oblivious to the presence of pollution at the first node, and employs an exact regeneration strategy to reconstruct the second node. The reconstructed node downloads the symbols from the shaded locations at the surviving nodes, as described in Fig.~\ref{fig:zigzag_repair}, and solves a linear system of equations to obtain $(c_5, c_6, c_7, c_8) + (-e_1, -e_2, -2^{-1}e_1, -2^{-1}e_2)$, where $2^{-1}$ denotes the inverse element of $2$ in $\mathbb{F}_q$, $q\geq 3$.
\begin{itemize}
  \item Case 1: First assume that a data collector accesses the first three (systematic) nodes to recover the original data. The data collector now has $\widetilde{\mathbf{c}} = \mathbf{c}_{\fv} + \mathbf{e}[I,~B_2^1,~\mathbf{0}]$, where $I$ and $\mathbf{0}$ are $4\times 4$ identity and zeroes matrices, respectively. Note that
\begin{small}
\begin{equation}
B_2^1 = \left[ \begin{array}{cccc}-1&0&-2^{-1}&0\\0&-1&0&-2^{-1}\\0&0&0&0\\0&0&0&0\\ \end{array}\right].
\end{equation}
\end{small}
  \item Case 2: Assume that a data collector accesses the first, second, and the fourth nodes to recover the original data. In this case, the data collector  has $\widetilde{\mathbf{c}} = \widehat{\mathbf{c}_{\fv}} + \mathbf{e}[I,~B_2^1,~\mathbf{0}]$, where $I$, $\mathbf{0}$, and $B_2^1$ are defined as previously, and $\widehat{\mathbf{c}_{\fv}}=(c_1,c_2,\ldots, c_8, c_1+c_5+c_9,c_2+c_6+c_{10},c_3+c_7+c_{11},c_4+c_8+c_{12})$ is a codeword of a $[12,4,9]_{q^{12}}$ Gabidulin code $\widehat{\cC}^{\textmd{Gab}}$ with evaluation points $\{g_1,g_2,\ldots, g_8,g_1+g_5+g_9,g_2+g_6+g_{10},g_3+g_7+g_{11},g_4+g_8+g_{12}\}$.
\end{itemize}

In any case, $\widetilde{\mathbf{c}}$ contains an error of rank at most $4$. Since the minimum rank distance of codes $\cC^{\textmd{Gab}}$ and $\widehat{\cC}^{\textmd{Gab}}$ is $9$, the codeword $\mathbf{c}\in \cC^{\textmd{Gab}}$ (for Case 1) and the codeword $\widehat{\mathbf{c}}\in \widehat{\cC}^{\textmd{Gab}}$ (for Case~2) can be correctly decoded. Now, this allows the original information $\fv$ to be recovered by~(\ref{eq:rank error}) (and by applying the erasure decoding of $C^{\textmd{MDS}}$ in Case 2).
\end{example}
\subsubsection{Optimality of Construction I}
\label{subsubsec: parameters}

 Next, we show that our concatenated scheme attains the upper  bound~(\ref{bound}) on resilience capacity and thus, is optimal (see Remark~\ref{rm:static_vs_general}).

 First, we note that for an $[n,\cM,d_{min},\alpha, \beta,d]_{\F}$ optimal repair MDS code it holds that $\beta=\frac{\alpha}{d-k+1}$ and hence the upper bound~(\ref{bound}) on the resilience capacity can be rewritten as
\begin{align}
\label{bound 2}
C_t(\alpha, \beta,d)&\leq \sum_{i=2t+1}^{k}\min\{(d-i+1)\frac{\alpha}{d-k+1},\alpha\} = \alpha(k-2t).
\end{align}

Let the set of parameters $\{\cM, n,k,\alpha,\beta,d, N, m, D\}$ be as described in the construction of Section~\ref{subsubsec:constructionMSR}.
Then $N=\cM+D-1$ and  $N=\alpha k$.
Let $t$ be an integer such that $D = 2t\alpha+1$. Then $\alpha k=\cM+D-1=\cM+2t\alpha $, and hence $\cM=\alpha (k-2t)$.
Thus, our concatenated coding scheme  achieves the bound in (\ref{bound 2}).
\begin{remark}
The authors  in~\cite{pawar11} provided an explicit construction of codes that attain the bound in \eqref{bound 2} for bandwidth-limited regime. However, this construction has practical limitations for large values of $t$ since the decoding algorithm presented in~\cite{pawar11} is exponential in $t$. On the other hand, the decoding of codewords in the construction presented in our paper  is efficient as it is based on two efficient decoding algorithms: one for an optimal repair MDS array code, and  other one for a Gabidulin code. However, our coding scheme provides resilience for a weaker model of adversarial errors.
\end{remark}

\begin{remark}
In \cite{RSRK_isit12}, Rashmi et al. consider a scenario, referred as `\emph{erasure}',  where some nodes which are supposed to provide data during node repair become unavailable. It is easy to see from~(\ref{eq:rank error})
that our construction can also correct such erasures, as long as the minimum distance of the MRD code used as an outer code is large enough. The codes obtained by our construction also attain the bound on the capacity derived in~\cite{RSRK_isit12}. Here, we note that while our construction works with any MSR code and in particular with an MSR code with high rate, it provides a solution for  a restricted error model, i.e., static~errors.
\end{remark}

\subsection{Error resilience in optimal LRCs}
\label{subsec:static LRC}
In this subsection we study DSS which employ locally repairable codes in the presence of  errors. We consider only optimal LRCs, i.e., we consider an $(n,n-d_{\min}+1)$-DSS where $d_{\min}$ attains the upper bound~(\ref{eq:upp_bound}). First, we provide an upper bound on the resilience capacity and then we present a coding scheme which attains this bound for the static errors.

\subsubsection{Resilience Capacity for LRCs}
\label{subsubsec:resilienceLRC}
In the following, we derive an upper bound on the amount of data that can be reliably stored on a DSS employing an LRC which contains corrupted nodes.

We denote by $C_{t}(r,\delta,\alpha)$ the resilience capacity of an $(n,n-d_{\min}+1)$-DSS which employs an $(r,\delta, \alpha)$ LRC in the presence of $t$ corrupted nodes (for any type of errors).
In other words, a data collector contacting any $n - d_{\min} + 1$ storage nodes can reconstruct the original data of size $C_{t}(r,\delta,\alpha)$ stored on DSS in the presence of at most $t$ adversarial storage nodes.

\begin{theorem}
\label{thm:capacityLRC}
 Consider an $(n,n-d_{\min}+1)$-DSS which employs an $(r,\delta, \alpha)$ LRC. If there are at most $t$ corrupted nodes,
then the upper bound on the resilience capacity is given by
\begin{equation}
\label{eq:resilienceLRC}
C_{t}(r,\delta,\alpha)\leq \left(\rho r-2t\right)\alpha+\min\{h\alpha, r\alpha\},
\end{equation}
where $\rho = \floorb{\frac{n - d_{\min} + 1}{r + \delta - 1}}$,  $h = (n - d_{\min} + 1) - \rho(r + \delta - 1)$,
and $2t<\rho r+\min\{h, r\}$.
\end{theorem}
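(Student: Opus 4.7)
The proof combines the $(r,\delta,\alpha)$-locality constraint, used in the derivation of the minimum-distance bound~(\ref{eq:upp_bound}), with a Singleton bound in the rank metric of Section~\ref{subsec:rank-metric codes} applied to the error channel induced by $t$ adversarial nodes --- in analogy with how Pawar \emph{et al.}~\cite{pawar11} obtained~(\ref{bound}) from a cut-set bound on the repair flow.

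First, I would fix a specific set $S$ of $k = n - d_{\min} + 1$ nodes that the data collector is allowed to contact. I choose $S$ so that it contains $\rho = \floorb{k/(r+\delta-1)}$ entire local groups plus $h = k - \rho(r+\delta-1)$ additional nodes from one further local group. By the $(r,\delta,\alpha)$-locality condition, within each complete group at most $r$ nodes carry $\F$-linearly independent data (the remaining $\delta-1$ are functions of these $r$), so the total content of $S$ lies in an $\F$-subspace of dimension at most $N := (\rho r + \min\{h,r\})\alpha$.

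Next I would impose the error-correction requirement. Since at most $t$ nodes of $S$ are adversarially corrupted and each node stores $\alpha$ symbols of $\F$, the additive error introduced in $S$ has $\F_q$-rank at most $t\alpha$ in the $m\times k\alpha$ matrix representation of Section~\ref{subsec:rank-metric codes}. For the data collector to uniquely recover $\fv \in \F^{\cM}$ with $\cM = C_t$, the image of the encoding restricted to $S$ must be a rank-metric code of minimum rank distance at least $2t\alpha + 1$. Applying the rank-metric Singleton bound to this code, which sits inside an ambient of $\F$-dimension $N$ and has $\F$-dimension $\cM$, yields $2t\alpha + 1 \leq N - \cM + 1$; rearranging gives $\cM \leq (\rho r - 2t)\alpha + \min\{h\alpha, r\alpha\}$, as claimed. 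The condition $2t < \rho r + \min\{h, r\}$ appears as the requirement that this upper bound be nonnegative.

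The main obstacle will be executing the second step in full generality: I must argue that for \emph{any} linear encoding into an $(r,\delta,\alpha)$ LRC that tolerates $t$ adversarial nodes, the induced encoding into the content of $S$ is a rank-metric code with the stated minimum rank distance. I would handle this by translating the node-level adversary directly into a rank-$\leq t\alpha$ perturbation in the $\F_q$-matrix picture, and then invoking the Singleton bound on the $\F$-linear image of the file in the content of $S$. An equivalent combinatorial route would be to select the set $S$ so that no matter how the adversary distributes its $t$ corruptions, the resulting ambiguity consumes exactly $2t\alpha$ of the available $\F_q$-degrees of freedom in $S$; the rank-metric route is cleaner because it integrates with the outer Gabidulin code that Construction~I uses to meet the bound.
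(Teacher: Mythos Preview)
Your argument has a genuine gap at the necessity step. You correctly observe that any error the adversary can inject on $t$ nodes has $\F_q$-rank at most $t\alpha$, but from this you conclude that the code restricted to $S$ \emph{must} have minimum rank distance at least $2t\alpha+1$. That implication runs in the wrong direction for an upper bound: the adversary's errors form a \emph{proper} subset of all rank-$\le t\alpha$ perturbations (they are supported on at most $t$ of the node-blocks), and unique decodability against this restricted class is strictly weaker than having rank distance $2t\alpha+1$. For a concrete witness, the length-$N$ repetition code $\{(c,\ldots,c):c\in\F_{q^m}\}$ has $\F_q$-rank distance $1$, yet it corrects any $t$ block errors whenever the number of blocks exceeds $2t$. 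Since Theorem~\ref{thm:capacityLRC} is an upper bound that must hold for \emph{every} $(r,\delta,\alpha)$ LRC over an arbitrary field $\F$ (no extension-field structure is assumed), you are not entitled to impose a rank-distance requirement on the code, and the rank-metric Singleton bound yields nothing. What you have actually shown is that large rank distance is \emph{sufficient} for resilience --- which is exactly what Construction~II exploits to meet the bound --- not that it is necessary.

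The paper proceeds instead via a cut/indistinguishability argument in the style of~\cite{pawar11}. It fixes the same data collector you describe (contacting $\rho$ full groups plus $h$ extra nodes, giving cut value $\rho r\alpha+\min\{h\alpha,r\alpha\}$), and then fixes a \emph{specific} adversarial pattern in which the $t$ corrupted nodes lie among the non-repaired nodes that contribute independent symbols to the cut. Those $t\alpha$ symbols ($\Mm_1$), paired with $t\alpha$ symbols from matching intact nodes ($\Mm_2$), are information-theoretically indistinguishable to the data collector, so only ${\rm CUT}-2t\alpha$ symbols can carry file information. Your ``equivalent combinatorial route'' in the last paragraph is precisely this argument; developing it --- and in particular committing to a worst-case adversarial pattern rather than quantifying over all of them via rank --- is what yields a correct proof.
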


 \begin{proof}
We assume that the LRC under consideration has $g$ disjoint local groups, as the upper bound on the minimum distance for LRCs is achievable only if the local groups of the code are disjoint~\cite{KPLV12}. We use vector $\tauv = (\tau_1, \tau_2,\ldots, \tau_g)$ to denote an adversarial node pattern, where $\tau_i$ denotes the number of adversarial nodes in $i$th local group and $\sum_{i = 1}^{g}\tau_i = t$.
 Note that during a node repair  a newcomer node contacts any $r$ out of $r + \delta - 2$ surviving nodes in its local group and downloads all the data stored on these $r$ nodes in order to regenerate the failed node.
To obtain the upper bound, we evaluate the value of a cut in the information flow graph for this DSS.  Similarly to~\cite{RKSV12}, we consider a data collector which contacts $r + \delta - 1$ nodes from each of first $\rho = \floorb{\frac{n - d_{\min} + 1}{r + \delta - 1}}$ local groups and $h = (n - d_{\min} + 1) - \rho(r + \delta - 1)$  nodes from $(\rho + 1)$th local group. Then we have the following cut value $\rm CUT$
\begin{align}
{\rm CUT} =  \rho r \alpha + \min\{h\alpha, r\alpha\}.
\end{align}

Next, we consider the adversarial node pattern $\tauv = (\tau_1 = r, \tau_2 = r, \ldots, \tau_{\floorb{t/r}} = r, \tau_{\floorb{t/r} + 1} = t - r\floorb{\frac{t}{r}}, \tau_{\floorb{t/r} + 2} = 0,\ldots, \tau_{g} =  0)$. We further assume that in each local group we have repaired $r + \delta - 1 - r = \delta - 1$ nodes by contacting remaining $r$ nodes and adversarial nodes always belong to non-repaired nodes.
Further, we divide the independent symbols contributing to the value of ${\rm CUT}$ into three groups:
\begin{enumerate}
\item $\Mm_1$: $t\alpha$ symbols corresponding to $t$ adversarial nodes.
\item $\Mm_2$: $t\alpha$ symbols from $t$ intact nodes.
\item $\Mm_3$: ${\rm CUT} - 2t\alpha$ remaining symbols after excluding $\Mm_1$ and $\Mm_2$ from ${\rm CUT}$.
\end{enumerate}

Next, we use an argument similar to that used in~\cite{pawar11}. Note that $\Mm_1$ and $\Mm_2$  can not carry any information to the data collector: Without the knowledge of the identity of adversarial nodes, the data collector can not decide which one of two sets of symbols $\{\Mm_1, \Mm_3\}$ or $\{\Mm_2, \Mm_3\}$ corresponds to legitimate information. Therefore, the data collector relies on symbols associated with $\Mm_3$ to reconstruct the original file. This gives us the following bound on the resilience capacity:
\begin{align*}
C_{t}(r,\delta,\alpha) \leq {\rm CUT} - 2t\alpha = (\rho r - 2t)\alpha + \min\{h\alpha, r\alpha\}.
\end{align*}
\end{proof}

\subsubsection{Construction of Error-Correcting Optimal LRC}
In this subsection we present a concatenated coding scheme, also based on a Gabidulin  and MDS array codes, for error-correcting locally repairable codes and prove the optimality of this construction for static errors model.

\textbf{Construction II.}
Let $n,d_{\min}, r,\delta,\alpha$ be the given parameters. (We assume for simplicity that $(r+\delta-1)|n$).
Let $\cM,\cM',N, m, \rho, h$ be the positive integers such that $\cM'\leq \cM\leq N\leq m$, $r\alpha\leq \cM=\rho r\alpha+\min\{h,r\}\alpha$, $N=\frac{nr\alpha}{r+\delta-1}$,
$\rho = \floorb{\frac{n - d_{\min} + 1}{r + \delta - 1}}$, $h = (n - d_{\min} + 1) - \rho(r + \delta - 1)$,  and $d_{\min}$ attains  the bound~(\ref{eq:upp_bound}), i.e., $d_{\min}= n -
\left\lceil\frac{\mathcal{M}}{\alpha}\right\rceil + 1 - \left(\left\lceil\frac{\mathcal{M}}{r\alpha}\right\rceil-1\right)(\delta - 1)$.

Consider a file $\fv$ of size $\cM'$ over $\F=\F_{q^m}$. The encoding is identical to the encoding for LRC, presented in Subsection~\ref{subsec:LRC}, the only difference is that we apply  an $[N,\cM',D=N-\cM'+1]_{q^m}$ Gabidulin code $\cC^{\textmd{Gab}}$ in the first step of the encoding:
Let $\textbf{c}_{\fv}=(f(g_1),\ldots,f(g_N))$ be the codeword of $\cC^{\textmd{Gab}}$ that corresponds to the file $\fv$, where $g_1,\ldots,g_N\in\F$ are evaluation points for the Gabidulin code, which are linearly independent over $\F_q$. This codeword is partitioned into local disjoint groups, each one of size $r\alpha$ and then each local group is encoded using an $[(r+\delta-1), r, \delta, \alpha]_q$ MDS array code (over~$\F_q$).

The following theorem shows the error resilience of the proposed code. Note, that the node repairs of this scheme are performed exactly in the same (local) way  as in an error-free LRC.
\begin{theorem}
\label{thm:LRC resilience}
Let $t$ be the number of erroneous nodes in the system which employs a $d_{\min}$-optimal LRC based on concatenation of Gabidulin and  MDS array codes, from Construction~II.
If  the minimum distance $D$ of the underlying Gabidulin code satisfies $D\geq 2t\alpha+(\frac{n}{r+\delta-1}-\left\lfloor\frac{n-d_{\min}+1}{r+\delta-1}\right\rfloor)r \alpha-\min\{h\alpha, r\alpha\}+1$, then the original data can be recovered from any $n-d_{\min}+1$ nodes.
\end{theorem}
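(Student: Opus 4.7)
The plan is to parallel the rank-metric argument used in Theorem~\ref{thm:resilience MSR}, adapted to the local-group structure of Construction~II and to the fact that a data collector now contacts only $n-d_{\min}+1$ of the $n$ nodes. Throughout, I will treat the outer Gabidulin code as operating over a rank-error/rank-erasure channel induced by the contact set and by the adversarial pollution.

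First I will translate the data collector's view into that channel. Fix any contact set $\cD\subseteq[n]$ with $|\cD|=n-d_{\min}+1$, and write $\tau'_i=|\cD\cap\text{group }i|$. For each group with $\tau'_i\ge r$, the MDS property of the inner $[(r+\delta-1),r,\delta,\alpha]_q$ code lets us invert the local encoding and recover all $r\alpha$ Gabidulin symbols assigned to that group; for a group with $\tau'_i<r$, we instead obtain $\tau'_i\alpha$ independent $\mathbb{F}_q$-linear combinations of those $r\alpha$ symbols. Repeating the argument of Lemma~\ref{lm:equivalent gabidulin code} group by group, each recovered quantity is an evaluation of the linearized polynomial $f(x)$ at a new point of $\F$, and the invertibility of each local MDS transformation together with the $\mathbb{F}_q$-linear independence of the original evaluation points $g_1,\ldots,g_N$ ensures that all the new evaluation points remain linearly independent over $\F_q$. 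Hence the collector effectively observes a punctured, error-corrupted codeword of a Gabidulin code $\widehat{\cC}^{\textmd{Gab}}$ with the same parameters $[N,\cM',D]$ as $\cC^{\textmd{Gab}}$. The number of rank erasures is $N-\sum_i\min\{\tau'_i,r\}\alpha$; since $\min\{\cdot,r\}$ is concave, this is maximized under the constraints $\sum_i\tau'_i=n-d_{\min}+1$ and $\tau'_i\in[0,r+\delta-1]$ by packing the contacted nodes into as few groups as possible, giving the worst case
\[
N-\rho r\alpha-\min\{h\alpha,r\alpha\}=\Bigl(\tfrac{n}{r+\delta-1}-\rho\Bigr)r\alpha-\min\{h\alpha,r\alpha\}.
\]

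On the error side, I will argue that even though one adversarial node may, through intermediate local repairs, propagate pollution to other nodes in its group, every polluted symbol in that group lies in the $\mathbb{F}_q$-span of the node's original $\alpha$ error symbols. Consequently the error added to the recovered Gabidulin codeword has rank at most $t\alpha$ over $\F_q$, irrespective of how the $t$ adversaries are distributed across the groups. Invoking the rank-error/rank-erasure correction capability of $\widehat{\cC}^{\textmd{Gab}}$ from Section~\ref{subsubsec:rank error}, the decoder succeeds whenever $2t\alpha+\bigl(\tfrac{n}{r+\delta-1}-\rho\bigr)r\alpha-\min\{h\alpha,r\alpha\}\le D-1$, which is exactly the hypothesis of the theorem; erasure decoding of the inner MDS array codes (applied within each partially-contacted group) then produces the original file $\fv$. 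The main delicate points in my view are the concavity-based worst-case step and the bookkeeping that confirms intermediate local repairs do not inflate the rank of pollution beyond $t\alpha$; both will be handled using only the $\mathbb{F}_q$-linearity of the inner MDS array codes and of the linearized polynomial evaluation map.
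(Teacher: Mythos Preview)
Your proposal is correct and follows essentially the same approach as the paper: translate the data collector's observations via Lemma~\ref{lm:equivalent gabidulin code} into a Gabidulin codeword subject to $N-\rho r\alpha-\min\{h\alpha,r\alpha\}$ rank erasures and at most $t\alpha$ rank errors, then invoke the rank error/erasure correction of Section~\ref{subsubsec:rank error}. Your treatment is slightly more detailed than the paper's (you make the worst-case erasure count explicit via concavity and handle partially-contacted groups more carefully), and your closing remark about inner MDS erasure decoding is superfluous since decoding the Gabidulin code already recovers~$\fv$.
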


\begin{proof}
We consider the worst case data collector which contacts $n-d_{\min}+1$ nodes which belong to $\rho$ (or $\rho+1$) different groups, $(r+\delta-1)|(n-d_{\min}+1)$ (or $(r+\delta-1)\nmid  (n-d_{\min}+1)$ ), which contain all the corrupted nodes. During the node repairs, an error spreads to at most all the nodes which belong to the same local group that contains an erroneous node. Based on Lemma~\ref{lm:equivalent gabidulin code} the data collector obtains $\left\lfloor\frac{n-d_{\min}+1}{r+\delta-1}\right\rfloor r \alpha+\min\{h\alpha, r\alpha\}$ symbols of the corresponding Gabidulin code, which can contain an error of rank at most $t\alpha$, similarly to the proof of Theorem~\ref{thm:resilience MSR}. Therefore, this corresponds to $t\alpha$ rank errors and $N-\left\lfloor\frac{n-d_{\min}+1}{r+\delta-1}\right\rfloor r \alpha-\min\{h\alpha, r\alpha\}$ rank erasures. Thus the statement of the theorem follows from the fact that $N=\frac{n}{r+\delta-1} r \alpha$,

\end{proof}
Next, we will show that our error-correcting LRC attains the upper bound on the resilience capacity of Theorem~\ref{thm:capacityLRC}, for a static errors model.

\begin{corollary} Let $\cC^{\emph{LRC}}$ be the LRC for $(n,n-d_{\min}+1)$-DSS obtained from Construction II and let $t$ be the number of corrupted nodes, where $2t<\rho r+\min\{h, r\}$, for $\rho$ and $h$ defined in Theorem~\ref{thm:capacityLRC}. If the corresponding Gabidulin code   has the minimum distance $D=2t\alpha+(\frac{n}{r+\delta-1}-\rho)r \alpha-\min\{h\alpha, r\alpha\}+1$ then $\cC^{\emph{LRC}}$ attains the bound~(\ref{eq:resilienceLRC}) on the resilience capacity.
\end{corollary}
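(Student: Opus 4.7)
The plan is to verify the corollary by a direct algebraic computation: show that the dimension $\cM'$ of the file encoded by Construction~II (which is exactly the amount of data reliably stored under the hypothesis) equals the upper bound on resilience capacity given in~(\ref{eq:resilienceLRC}). There is no new combinatorial argument needed; the two heavy pieces, namely the achievability of the construction and the converse upper bound, have already been established in Theorem~\ref{thm:LRC resilience} and Theorem~\ref{thm:capacityLRC} respectively. The corollary is a matching of parameters.

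First I would apply Theorem~\ref{thm:LRC resilience}. Since the hypothesis pins the minimum rank distance of the outer Gabidulin code at
\[
D = 2t\alpha + \Bigl(\tfrac{n}{r+\delta-1}-\rho\Bigr)r\alpha - \min\{h\alpha, r\alpha\} + 1,
\]
which meets the sufficient condition of that theorem with equality, any data collector contacting $n-d_{\min}+1$ nodes can recover the original file of size $\cM'$, despite up to $t$ corrupted nodes. Hence $\cM'$ is an achievable resilience capacity of $\cC^{\mathrm{LRC}}$, and it only remains to show that $\cM' = (\rho r - 2t)\alpha + \min\{h\alpha, r\alpha\}$.

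Second, I would compute $\cM'$ from the MRD relation $\cM' = N - D + 1$ applied to Construction~II, using $N = \tfrac{nr\alpha}{r+\delta-1}$. Substituting the specified $D$ gives
\[
\cM' = \tfrac{nr\alpha}{r+\delta-1} - 2t\alpha - \Bigl(\tfrac{n}{r+\delta-1}-\rho\Bigr)r\alpha + \min\{h\alpha, r\alpha\}.
\]
The two $\tfrac{nr\alpha}{r+\delta-1}$ terms cancel, leaving $\cM' = (\rho r - 2t)\alpha + \min\{h\alpha, r\alpha\}$, which is exactly the right-hand side of~(\ref{eq:resilienceLRC}). Combined with the converse of Theorem~\ref{thm:capacityLRC}, this equality establishes optimality.

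The only subtle point (and the closest thing to an obstacle) is checking that the parameter choices are internally consistent: one should confirm that the constraint $2t < \rho r + \min\{h, r\}$ guarantees $\cM' \geq r\alpha$ (so Construction~II is applicable), that $D\leq N$ so the Gabidulin code exists, and that the upper bound~(\ref{eq:resilienceLRC}) is indeed expressed in the form that cancels cleanly against $N$. These are all routine verifications, but they are worth spelling out to ensure that the corollary is not vacuous for the specified regime.
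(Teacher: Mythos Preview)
Your proposal is correct and follows essentially the same route as the paper: use the MRD relation $\cM' = N - D + 1$ with $N = \tfrac{nr\alpha}{r+\delta-1}$ and the specified $D$, then simplify to obtain $\cM' = (\rho r - 2t)\alpha + \min\{h\alpha, r\alpha\}$, which matches the bound~(\ref{eq:resilienceLRC}). The paper's proof is exactly this algebraic substitution; your added remarks on parameter consistency are reasonable side checks but are not part of the paper's argument.
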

\begin{proof} We need to prove that
$\cM'=\left(\rho r-2t\right)\alpha+\min\{h\alpha, r\alpha\}$. Since the corresponding $[N,\cM',D]_{q^m}$ Gabidulin code is an MRD code, it holds that $\cM'=N-D+1$. Then
\begin{align}
\cM'&=\frac{n r\alpha}{r+\delta-1}-(2t\alpha+\left(\frac{n}{r+\delta-1}-\rho\right)r \alpha -\min\{h\alpha, r\alpha\}+1)+1\nonumber \\
& = \rho r\alpha-2t\alpha+\min\{h\alpha, r\alpha\}. \nonumber
\end{align}

\end{proof}

Now we illustrate the idea of the Construction II.  We consider the case where an adversary pollutes the information stored on a single storage node.
\begin{example}
Consider the code of Example~\ref{ex:array} with additional parameters $t=1$ and $\cM'=20$. The Gabidulin code used in the first step of the encoding is the  $[36,20,17]_{q^{36}}$ code $\cC^{\textmd{Gab}}$ and the MDS array code used in the second step of the construction is the $(5,3)$ zigzag code from Example~\ref{ex:zig-zag}. Here we have $\rho=2$ and $h=1$. In other words, the system stores a file $\fv$ of size $20$ over $\F_{q^{36}}$, and a data collector should reconstruct this file from any $n-d_{\min}+1=15-5+1=11$ nodes.
Let us assume that an adversary attacks the third storage node and introduces erroneous information.
The erroneous information at the third node is modeled as $(a_9, a_{10}, a_{11}, a_{12})+(e_1, e_2, e_3, e_4)$. Now assume that the second node fails. The system is oblivious to the presence of pollution at the third node, and employs the erasure decoding of the $[5,3\cdot4,3,4]_q$ MDS code to reconstruct the second node: Assume that the reconstructed node downloads  all the symbols from the first, the third and the fifth nodes
and solves a linear system of equations to obtain $(a_5, a_{6}, a_{7}, a_{8}) + (-e_4, -2e_3, -e_2, -2^{-1}e_1)$, where $2^{-1}$ denotes the inverse element of $2$ in $\mathbb{F}_q$, $q\geq 3$.

Assume that a data collector contacts $11$ first nodes ($\rho=2$ full groups, 5 nodes in each one, and $h=1$ node in the additional group). He obtains $12+12+4$ symbols corresponding to the Gabidulin codeword $\cv_{\fv}$ (which is of length $36$), where these $28$ symbols contain an error of rank at most~4:
 $$
 \cv_{\fv}+(e_1,e_2,e_3,e_4)[\mathbf{0},B_2,I_{4},\mathbf{0}_{4\times 24}]+(c_5,\ldots,c_{12})[\mathbf{0}_{8\times 28}, -I_8],
 $$
 where $I_m$ denotes the identity matrix of order $m$, $\mathbf{0}_{a\times b}$ denotes $a \times b$ matrix with all of its entries equal to zero, and $B_2$ is defined as follows:
 \begin{small}
\begin{equation*}
B_2 = \left[ \begin{array}{cccc}0&0&0&-2^{-1}\\0&0&-1&0\\0&-2&0&0\\-1&0&0&0\\ \end{array}\right].
\end{equation*}
\end{small}

  In other words, we have $8$ erasures in the Gabidulin codeword and at most 4 rank errors. Since the distance of $\cC^{\textmd{Gab}}$ is $D=17$, by~(\ref{eq:rank error}) the original data is recovered by applying errors and erasures correction of the Gabidulin code. (Note that $(e_1,e_2,e_3,e_4)$, $[\mathbf{0},B_2,I,\mathbf{0}_{4\times 24}]$, $(c_5,\ldots,c_{12})$ are unknown to the decoder but $[\mathbf{0}_{8\times 28}, -I_8]$ is known to the decoder).
\end{example}

\subsection{Error resilience in LRCs with local regeneration}

In this section, we discuss the hybrid codes which for a given locality parameters minimize the repair bandwidth.
These codes, as proposed in~\cite{RKSV12} and \cite{KPLV12}, are obtained by combining locally repairable codes with regenerating codes.

In a  repair process for an original LRC, a newcomer node contacts $r$ nodes in its
local group and downloads \emph{all} the data stored on these nodes.
 To allow a reduction in repair bandwidth, the idea of regenerating codes is used for the hybrid codes.
In particular, a newcomer node  contacts any $d\geq r$ nodes  in its local group and
downloads only  $\beta\leq\alpha$ symbols stored on these nodes in order to repair the
failed node, in other words, a regenerating code is applied in each local group.

That is, when an $[r+\delta-1,r,\delta,\alpha,\beta,d]_{\F_q}$ MSR code is applied in each local group instead of
an MDS array code in the second step of construction of LRC presented in subsection~\ref{subsec:LRC},
then the resulting code, denoted by MSR-LRC,  has the maximal minimum distance
(since an MSR code is also an MDS array code), the local minimum storage per node,
and the minimized repair bandwidth. (The details of this construction and its properties can be found in~\cite{RKSV12}.)
Also, when an MBR code is applied in each local group instead of
an MDS array code in the second step of construction of LRC presented in subsection~\ref{subsec:LRC},
then the resulting code, denoted by MBR-LRC,  has the maximal possible minimum distance,
and the local minimum repair bandwidth. (The details of this construction and its properties can be found in~\cite{MBR-LRC}.)

In the following, we consider error resilience for MSR-LRC and for MBR-LRC. In particular, we provide a upper bound on the resilience capacity for these codes and consider construction for error-correcting MSR-LRC and MBR-LRC. We denote by $C_t(r,\delta,\alpha,\beta,d)_{\textmd{MSR}}$ ($C_t(r,\delta,\alpha,\beta,d)_{\textmd{MBR}}$) the resilience capacity of an $(n,n-d_{\min}+1)$-DSS which employs an $(r,\delta,\alpha,\beta,d)$ MSR-LRC (MBR-LRC) in the presence of $t$ corrupted nodes.

\begin{theorem}
\label{thm:capacity MSR-LRC and MBR_LRC}
 Consider an $(n,n-d_{\min}+1)$-DSS which employs an $(r,\delta, \alpha,\beta,d)$ MSR-LRC (MBR-LRC), for $r+\delta-1>d>r$. If there are at most $t$ corrupted nodes, then the upper bound on the resilience capacity of MSR-LRC is given by
\begin{align}
\label{eq:resilience MSR-LRC}
C_{t}(r,\delta,\alpha,\beta,d)_{\textmd{MSR}} &\leq \left(\rho - 2\floorb{\frac{t}{d}}\right)r\alpha +\left(\min\{h, r\} - 2\min\{\gamma,r\} \right)\alpha, \nonumber
\end{align}
and the resilience capacity for the MBR-LRC is upper bounded by
\begin{align*}
C_{t}(r,\delta,\alpha,\beta,d)_{\textmd{MBR}}&\leq \min\{\textrm{term I}, \textrm{term II}\},
\end{align*}
where
\begin{align*}
\textrm{term I} &  = \left(\rho- 2\floorb{\frac{t}{d}}\right) B_{\rm MBR} - 2 \sum_{i = 1}^{\min\{\gamma, r\}} (d - i + 1)\beta + \sum_{i = 1}^{\min\{h, r\}}(d - i + 1)\beta, \\
\textrm{term II} & =  \tilde{\rho}\sum_{i = 2s+1}^{d}(d - i + 1)\beta + (\rho - \tilde{\rho})\sum_{i = 2\tilde{s}+1}^{d}(d - i + 1)\beta + \sum_{i = 2\hat{s}+1}^{\min\{h,d\}}(d - i + 1).
\beta,
\end{align*}
Here, $\rho = \floorb{\frac{n - d_{\min} + 1}{r + \delta - 1}}$, $h = (n - d_{\min} + 1) - \rho(r + \delta - 1)$,
$\gamma = \left(t - \floorb{\frac{t}{d}}d\right)$, $B_{\rm MBR}=r\alpha-\frac{r(r-1)}{2}\beta$ and $\tilde{\rho}, s, \tilde{s}$, and $\hat{s}$ are such that $\tilde{\rho}s + (\rho - \tilde{\rho})\tilde{s} + \hat{s} = t$, $2\max\{s, \tilde{s}\} \leq d$, and $2\hat{s} \leq h$.
\end{theorem}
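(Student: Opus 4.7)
The plan is to generalize the cut-set argument from the proof of Theorem~\ref{thm:capacityLRC} to the hybrid setting in which each local group is replaced by a regenerating code at the MSR or MBR operating point.

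First I would build the information flow graph associated with an $(r,\delta,\alpha,\beta,d)$ locally regenerating code: every storage node is split into an in-node / out-node pair joined by an edge of capacity $\alpha$, and in each local group every newcomer's in-node is fed by edges of capacity $\beta$ from the out-nodes of its $d$ helpers, which are required to lie in the \emph{same} local group. Because local regeneration never crosses a group boundary, the graph decomposes into $g=n/(r+\delta-1)$ sub-graphs connected only through the source and the data collector, so any $s$-$t$ min-cut separates group by group. I would then fix the worst-case data collector that takes all $r+\delta-1$ nodes from each of the first $\rho$ groups and $h$ nodes from one further group and evaluate the cut group-by-group using the single-group Dimakis cut: for MSR each full group contributes $r\alpha$ and the partial group contributes $\min\{h,r\}\alpha$; for MBR each full group contributes $B_{\rm MBR}$ and the partial group contributes $\sum_{i=1}^{\min\{h,r\}}(d-i+1)\beta$.

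Next I would inject the adversary by choosing a pattern and applying the Pawar-et-al.\ decomposition $\Mm_1/\Mm_2/\Mm_3$ to each group independently. For the MSR bound and for term~I of the MBR bound I would use the concentrated pattern $\tauv=(d,d,\dots,d,\gamma,0,\dots,0)$ with $\lfloor t/d\rfloor$ saturated groups and $\gamma=t-d\lfloor t/d\rfloor$ adversaries in one more group. Since any local repair in a saturated group must draw $d$ helpers from that group's $r+\delta-2$ survivors and can thus be pushed through an all-adversarial helper set, every in-cut symbol from a saturated group is unreliable; Pawar's argument then forces an equally sized block of verifying symbols from an intact group into $\Mm_2$, so $2\lfloor t/d\rfloor$ groups worth of cut value are removed, producing the $-2\lfloor t/d\rfloor r\alpha$ term for MSR and the $-2\lfloor t/d\rfloor B_{\rm MBR}$ term for MBR. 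For the residual group the single-group bound \eqref{bound} applied at the respective operating point subtracts the first $2\min\{\gamma,r\}$ terms of that group's cut, yielding the $-2\min\{\gamma,r\}\alpha$ (MSR) and $-2\sum_{i=1}^{\min\{\gamma,r\}}(d-i+1)\beta$ (MBR) corrections.

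For MBR-LRC I would then derive term~II from a different, more balanced pattern with $s$ adversaries in each of $\tilde\rho$ groups, $\tilde s$ in each of the remaining $\rho-\tilde\rho$ full groups and $\hat s$ in the partial group, subject to $\tilde\rho s+(\rho-\tilde\rho)\tilde s+\hat s=t$, $2\max\{s,\tilde s\}\leq d$, and $2\hat s\leq h$. In each such group the single-group Pawar bound at the MBR point contributes $\sum_{i=2\tau_j+1}^{d}(d-i+1)\beta$ reliable symbols to the data collector (the sum runs up to $d$ rather than $r$ because the MBR Dimakis cut saturates at $d$ terms), and summing over the touched groups produces term~II. The theorem then follows by taking the minimum of the two candidate bounds, since each one arises from a valid adversarial strategy and any cut upper-bounds the capacity. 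The main obstacle is the careful accounting of error propagation through local regeneration: once a group contains $d$ adversarial nodes, repaired newcomers in that group may themselves carry errors, so the in-cut symbols attributable to that group must be reclassified from $\Mm_3$ into $\Mm_1$, and the matching clean symbols from an intact group must be moved into $\Mm_2$; reproducing Pawar's argument at the group level while simultaneously respecting the local repair constraints and the feasibility requirements on $s,\tilde s,\hat s$ in the second pattern is where the bulk of the technical work lies.
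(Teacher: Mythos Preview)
Your proposal is correct and follows essentially the same route as the paper's proof: the same worst-case data collector, the same concentrated adversarial pattern $(d,\dots,d,\gamma,0,\dots,0)$ for the MSR bound and for term~I of the MBR bound, the same balanced pattern $(s,\dots,s,\tilde{s},\dots,\tilde{s},\hat{s},0,\dots,0)$ for term~II, and the same group-wise application of the Pawar $\Mm_1/\Mm_2/\Mm_3$ decomposition. The only cosmetic difference is that the paper phrases the saturated-group step as ``all but $d$ nodes have been repaired at least once using those $d$ nodes as helpers,'' whereas you phrase it as error propagation through an all-adversarial helper set; the resulting accounting is identical.
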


\begin{proof}
\emph{MSR-LRC}:
 Similarly to the proof of Theorem~\ref{thm:capacityLRC}, we evaluate the value of a cut in the corresponding information flow graph.
We again consider a data collector that contacts $r + \delta - 1$ nodes from each of first $\rho$ local groups and $h$  nodes from $(\rho + 1)$th local group. We further assume that the pattern of adversarial nodes is $\tauv = (\tau_1 = d, \tau_2 = d, \ldots, \tau_{\floorb{t/d}} = d, \tau_{\floorb{t/d} + 1} = t - d\floorb{\frac{t}{d}}, \tau_{\floorb{t/d} + 2} = 0,\ldots, \tau_{g} =  0)$. In each group, we assume that all but $d$ nodes have been repaired at least once. The remaining $d$ nodes are used to repair all node failures. The corrupted nodes are assumed to be among these $d$ nodes in their local groups. The cut value for the information flow graph associated with this scenario is given by
\begin{align}
{\rm CUT_{\textmd{MSR}}} =  \rho r \alpha + \min\{h\alpha, r\alpha\}.
\end{align}
Similar to the proof of Theorem~\ref{thm:capacityLRC}, we divide the independent symbols in $\rm CUT_{\textmd{MSR}}$ into three groups:
\begin{enumerate}
\item $\Mm_1$: $\floorb{\frac{t}{d}}r\alpha + \min\{\gamma,r\}\alpha$ symbols corresponding to $t$ adversarial nodes.
\item $\Mm_2$: $\floorb{\frac{t}{d}}r\alpha +\min\{\gamma,r\}\alpha$ symbols from $\floorb{\frac{t}{d}}r + \gamma$ intact nodes.
\item $\Mm_3$: $\rm CUT_{\textmd{MSR}} - 2\left(\floorb{\frac{t}{d}}r +\min\{\gamma,r\} \right)\alpha$ remaining symbols after excluding $\Mm_1$ and $\Mm_2$
from $\rm CUT_{\textmd{MSR}}$.
\end{enumerate}
Again following the argument similar to that for LRCs, we get the following bound on the resilience capacity for an MSR-LRC:
\begin{align*}
C_{t}(r,\delta,\alpha,\beta,d)_{\textmd{MSR}} &\leq \rm CUT_{\textmd{MSR}} - 2\left(\floorb{\frac{t}{d}}r + \min\{\gamma,r\} \right)\alpha \nonumber \\
&=\left(\rho - 2\floorb{\frac{t}{d}}\right)r\alpha +\left(\min\{h, r\} - 2\min\{\gamma,r\} \right)\alpha.
\end{align*}
Note that if $t<r$ then this bound is the same as the bound~(\ref{eq:resilienceLRC}) for LRCs.


\emph{MBR-LRC}:
 If we consider the same adversarial node pattern as in case MSR-LRC, we get the following bound on the resilience capacity for an MBR-LRC:
\begin{align}
\label{eq:file_mbr1}
&C_{t}(r,\delta,\alpha,\beta,d)_{\textmd{MBR}} \leq \left(\rho - 2\floorb{\frac{t}{d}}\right) B_{\rm MBR} - 2 \sum_{i = 1}^{\min\{\gamma, r\}} (d - i + 1)\beta + \sum_{i = 1}^{\min\{h, r\}}(d - i + 1)\beta.&
\end{align}

Alternatively, we consider another pattern of eavesdropped nodes $\tauv = (\tau_1 = s, \tau_2 = s,\ldots, \tau_{\tilde{\rho}}= s, \tau_{\tilde{\rho} + 1} = \tilde{s}, \tau_{\rho} = \tilde{s}, \tau_{\rho + 1} = \hat{s}, \tau_{\rho + 2} = 0,\ldots, \tau_{g} = 0)$. Here, $\tilde{\rho}, s, \tilde{s}$, and $\hat{s}$ are such that $\tilde{\rho}s + (\rho - \tilde{\rho})\tilde{s} + \hat{s} = t$, $2\max\{s, \tilde{s}\} \leq d$, and $2\hat{s} \leq \min\{h,r\}$. Note that such choice is always possible given the particular choice of data collector and the assumption that $2t < \rho r+\min\{h,r\}$. For this pattern of adversarial node, we obtain the following upper bound on the resilience capacity for an MBR-LRC:
\begin{align}
\label{eq:file_mbr2}
&C_{t}(r,\delta,\alpha,\beta,d)_{\textmd{MBR}} \leq  (\rho - \tilde{\rho})\sum_{i = 2\tilde{s}+1}^{d}(d - i + 1)\beta +\tilde{\rho}\sum_{i = 2s+1}^{d}(d - i + 1)\beta + \sum_{i = 2\hat{s}+1}^{\min\{h,d\}}(d - i + 1)\beta.
\end{align}
Now, we can pick minimum of RHS of \eqref{eq:file_mbr1} and \eqref{eq:file_mbr2} as an upper bound on the file size.
\end{proof}

\subsubsection{Construction of Error-Correcting LRC with Local Regeneration}

Similarly to the construction of error-correcting LRCs, where the only difference to the construction of $d_{\min}$-optimal LRC (without error correction) is the larger minimum rank distance of the corresponding Gabidulin code, the construction for error-correcting LRCs with local regeneration is based on the construction of an error-free MSR-LRC (MBR-LRC), where the Gabidulin code is chosen with larger minimum distance.
We note that since for the case $t<r$ the bounds for LRCs and MSR-LRCs are identical, the optimal codes can be obtained simply by replacement  of MDS array codes by MSR codes in Construction II. However,   constructions for optimal error-correcting MSR-LRCs for general cases and also for optimal error-correcting MBR-LRC codes  still remain an open problem.


\section{Dynamic Error Model}
\label{sec:dynamic}
In this section, we consider the problem of designing coding schemes for DSS that work under dynamic error model.
Note, that in a static error model each time an attacked node is requested for the data to be sent, it sends some linear combinations of the data that has been modified on it by an adversary, which the adversary is allowed to do only once. Therefore, the rank of the error that a single node under static attack causes throughout the operation of DSS is bounded above by $\alpha$. This is not the case under the dynamic error model as a single attacked node can inject an error of large rank if it is utilized in multiple node repairs, which may render the data stored on DSS useless.


Towards this model, some results are presented in \cite{pawar11} and \cite{RSRK_isit12}. The coding scheme proposed in \cite{pawar11} does not have an efficient decoding during the data reconstruction process and it works specifically with bandwidth efficiently repairable codes at the MBR point.
The coding scheme of \cite{RSRK_isit12} deals with the dynamic error model at the MSR point, but there scheme works only for low rate, i.e., $2k \leq n+1$. 

Next, we present two solutions to deal with attack under the dynamic error model. The first solution aims to correct errors during the node repair process. The second approach is based on existing literature on subspace signatures. All the results presented in this section are given for optimal repair MDS array codes. Since our locally repairable codes make use of MDS array codes (or MSR codes for codes with local regeneration) in each local group, the similar ideas can be applied for this family of DSS codes as well.

\subsection{Na\"{\i}ve scheme for dynamic error model}
A solution for the dynamic error model is to adopt a repair scheme where a newcomer node utilizes the redundancy in the downloaded data to perform error-free exact repair even in the presence of errors in the downloaded data. Next, we analyze the maximum amount of information that can be stored on the DSS employing concatenated codes proposed in Section~\ref{subsubsec:constructionMSR} under the dynamic error model, if an error free node repair is performed.

When a storage node fails, a newcomer node downloads $d\beta$ symbols from any $d$ surviving nodes $(d \geq k)$ (for MSR code $d\beta=\alpha+(k-1)\beta$). Since there can be at most $t$ adversarial nodes present in the system, the newcomer node receives at most $t\beta$ erroneous symbols. Therefore, out of $k\alpha$ symbols of a Gabidulin codeword, by Lemma~\ref{lm:equivalent gabidulin code},  the newcomer has $(k-1)\beta + \alpha$ symbols (using the fact that the inner code is an MDS code and we perform bandwidth efficient repair). All the other $k\alpha-(k-1)\beta -\alpha=(\alpha-\beta)(k-1)$ symbols of a Gabidulin codeword can be considered as the erased symbols. Let $\cM'$ denote the number of information symbols (over $\mathbb{F}_{q^m}$) that are stored on the DSS. Then the minimum distance $D$ of the corresponding Gabidulin code satisfies $D=k\alpha-\cM'+1$. Therefore
we can reconstruct the entire Gabidulin codeword and thus the data stored on the failed node, if we have
$
D=k\alpha - \cM' +1 \geq 2t\beta + (k-1)(\alpha - \beta) +1.
$

This gives us
\begin{equation}
\label{eq:bound2}
\cM' \leq \alpha  + (k-2t-1)\beta.
\end{equation}
Note that the bound in (\ref{bound}) is still applicable. For $k = 2t + 1$, the right hand side expression in (\ref{eq:bound2}) is equal to that in (\ref{bound}). Therefore, this na\"{\i}ve repair scheme is optimal in terms of the capacity of DSS even in the dynamic error model. However, the difference between these bounds is monotonically increasing with $(k-2t-1)$ and the solution proposed in this section is suboptimal for general values of system parameters $k$ and~$t$.

\begin{remark}
In a similar way, it can shown that the LRCs based construction presented in Section~\ref{subsec:static LRC} is optimal under dynamic error model when $\rho r+\min\{h,r\}=2t+1$.
\end{remark}

\subsection{Subspace signatures approach}
\label{subsec:dynamic_errors}

As mentioned previously, in the dynamic error model an attacked node can inject a high rank error. Thus, it is desirable to restrict the rank of the aggregate error that a particular attacked node can cause in the entire system under dynamic error model. In this subsection, we propose to combine the existing literature on detecting subspace pollution with MRD codes to counter a dynamic attack. Next, we illustrate this with the help of subspace signatures proposed in \cite{zhao}.

Let us consider an $n$-nodes DSS that employs a Gabidulin and a bandwidth efficiently repairable code based storage scheme as explained in Section~\ref{subsubsec:constructionMSR}. For node $i$, content stored on it, i.e. $\mathbf{y}_i\in \F^{\alpha}=\F_{q^m}^{\alpha}$, can be viewed as an $m \times \alpha$ matrix over $\mathbb{F}_q$. These $\alpha$ column vectors of length $m$ stored on $i$th node span a subspace (column space of $\mathbf{y}_i$ when viewed as a matrix over $\mathbb{F}_q$) in $\mathbb{F}^m_{q}$ of dimension at most $\alpha$. Since all elements of the coding matrix and repair matrices are from $\mathbb{F}_q$,
during node repair process  node $i$ sends
$\beta$ vectors that lie in the subspace spanned by $\mathbf{y}_i$. If we make sure that even under the dynamic error model an attacked node sends vectors from the same $\alpha$-dimensional subspace of $\mathbb{F}_q^m$ during node repairs and data reconstruction, a data collector encounters at most $t\alpha$-rank error, which can be corrected with a Gabidulin code of large enough distance as in the static model. Subspace signatures solve this problem of enforcing the requirement that a node sends data (vectors) from the same $\alpha$-dimensional subspace of~$\mathbb{F}_q^m$.

 We assume existence of a trusted verifier, who stores all $n$ subspace signatures, one signature for each storage node, generated according to the procedure explained in \cite{zhao}. Whenever a particular node sends data during a node repair or data reconstruction, the truster verifier checks the data against the stored subspace signature corresponding to that particular storage node.

For the purpose of the \emph{data reconstruction}, whenever a node does not pass the signature test, this node is considered as $\alpha$ rank erasures. If $s \leq t$ nodes fail the test during data reconstruction, the data collector deals with $s\alpha$ rank erasures and $(t-s)\alpha$ rank errors.
  Given that the outer Gabidulin code has minimum rank distance $2t\alpha+1 \geq 2(t-s)\alpha + s\alpha +1$, the original data can be reconstructed without an error. 

Next, we argue how subspace signatures help restrict rank of the error introduced in the system during a \emph{node repair process}.
  Assume that node $i$ fails. Let $\mathcal{R}_i \subseteq \{1,\ldots,n\}\backslash\{i\}$ denote the set of $d$ surviving nodes that are contacted to repair node $i$. In order to repair node $i$, each node $j \in \mathcal{R}_i$ is supposed to send $\mathbf{y}_jV_{ji}$, where $V_{ji}$ is an $\alpha \times \beta$ repair matrix of node $i$ associated with node $j$. Since the data downloaded through all the surviving nodes is verified against subspace signatures, data from node $j$ passes the test if it is of the form $\mathbf{y}_j\widehat{V}_{ji}$, where $\mathbf{y}_j\widehat{V}_{ji}$ is in the column space of $\mathbf{y}_j$ and  $\widehat{V}_{ji}$ may be different form $V_{ji}$.

If any of the surviving (helper) nodes does not pass the test, the trusted verifier begins the na\"{\i}ve repair for the failed node and the nodes that fail the test. During this na\"{\i}ve repair, entire data is downloaded from a set of $k-s$ nodes out of $d-s$ nodes that provide data for node repair and pass the subspace test. Here, $s$ is the number of nodes that fail the subspace test. Note that each node of these $k-s$ selected nodes provides additional $\alpha - \beta$ symbols as it has already sent $\beta$ symbols (over $\mathbb{F}_q^{m}$). The decoding algorithm for Gabidulin codes is run on $(k-s)\alpha$ symbols downloaded from this selected set of $k-s$ nodes. There can be at most $t-s$ adversary nodes present in the selected set of $k-s$ nodes ($s$ adversarial nodes that failed the subspace test are excluded from this process), which can contribute at most $(t-s)\alpha$ erroneous symbols. Since the distance of the Gabidulin code is greater than $2(t-s)\alpha + s\alpha +1$, the decoding algorithm recovers the original file, which is used to get the data
stored on nodes being repaired.

In case when all the adversarial nodes pass the test, the data provided by each node $j \in \mathcal{R}_i$ is of the form
\begin{equation*}
\mathbf{y}_j\widehat{V}_{ji} = \mathbf{y}_jV_{ji} + \mathbf{y}_j(\widehat{V}_{ji}-V_{ji}).
\end{equation*}
After performing exact repair process for node $i$, node $i$ stores
$\mathbf{y}_i + \mathbf{y_e}B_i$, where $B_i$ is an $t\alpha \times \alpha$ matrix over $\mathbf{F}_q$ and $\mathbf{y_e} = [\mathbf{y}_{i_1},\ldots, \mathbf{y}_{i_t}]\in\F_{q^m}^{t \alpha}$. Here $\{i_1,\ldots, i_t\}$ denotes the set of $t$ adversarial nodes. After the node repair, the trusted verifier generates a new subspace signature corresponding to the data stored on a node $i$ for future verification. At any point of time, the data stored on DSS can be represented as
\begin{equation}
\label{eq:anytime_dyn}
\widetilde{\mathbf{y}} = \mathbf{y} + \mathbf{y_eB},
\end{equation}
where columns with indices from $\{(i-1)\alpha+1,\ldots, t\alpha\}$ of $\mathbf{B}$ are equal to $\alpha$ columns of $B_i$, $1\leq i\leq n$. It is evident from (\ref{eq:anytime_dyn}) that the rank of the aggregate error in the system is at most $t\alpha$ and a Gabidulin code with large enough distance can ensure the reliable recovery of the original data.


\section{Conclusion and Future Research}
\label{sec:conclusion}

A novel concatenated coding scheme for DSS is presented.
The scheme makes use of rank-metric codes, in particular, Gabidulin codes, as the first step of the process of encoding the data. In the second step of the encoding process, MDS optimal repair array codes (locally or globally) are used. This construction ensures resilience against static adversarial errors.  A modification of the scheme based on subspace signatures enables resilience against dynamic errors. Also, upper bounds on the resilience capacity for LRCs, MSR-LRCs, and MBR-LRCs are presented.

We conclude with a list of open problems for future research.
\begin{enumerate}
  \item Do there exist (explicit) high-rate error-correcting MSR codes which attain the upper bound~(\ref{bound}) on the resilience capacity for a general (dynamic) error model?
  \item Is it possible to improve the bounds on resilience capacity for MSR-LRCs and MBR-LRCs?
  \item Do there exist (explicit) optimal MSR-LRCs and MBR-LRCs for a general set of parameters?
\end{enumerate}



\bibliographystyle{unsrt}
\bibliography{ErrorsDSS}


\end{document}